\numberwithin{equation}{section}
\newcommand{\beq}{\begin{equation}}
\newcommand{\eeq}{\end{equation}}
\theoremstyle{plain}
\newcommand{\fg}{\mathfrak g}
\newcommand{\TT}{{\mathbb T}}
\newcommand\cS{\mathcal{S}}
\newtheorem{proposition}{Proposition}[section]
\newtheorem{theorem}[proposition]{Theorem}		
\newtheorem*{theorem*}{Theorem}		
\newtheorem{lemma}[proposition]{Lemma}
\theoremstyle{definition}
\newtheorem{remark}[proposition]{Remark}
\theoremstyle{remark}
\newcommand{\RR}{\mathbb R}
\newcommand{\ZZ}{\mathbb Z}
\newcommand{\CC}{\mathbb C}
\newcommand{\cF}{\mathcal F}
\newcommand{\cK}{\mathcal K}
\newcommand{\cE}{\mathcal E}
\newcommand{\cA}{\mathcal A}
\newcommand{\be}{\begin{equation}}
\newcommand{\ee}{\end{equation}}
\newcommand{\dbar}{\bar\partial}
\DeclareMathOperator{\im}{im}
\DeclareMathOperator{\tr}{tr}
\DeclareMathOperator{\vol}{vol}
\renewcommand{\Im}{\mathsf{Im}}
\DeclareMathOperator{\Hom}{Hom}
\DeclareMathOperator{\coker}{coker}
\begin{document}

\title[Fractional Quantum Numbers, Complex Orbifolds and Noncommutative Geometry]
{Fractional Quantum Numbers, Complex Orbifolds and Noncommutative Geometry}

	\author{Varghese Mathai}
	\address{School of Mathematical Sciences, University of Adelaide, Adelaide SA 5005 Australia. }
	\email{mathai.varghese@adelaide.edu.au}
	
\author{Graeme Wilkin}
\address{Department of Mathematics, University of York, YO10 5DD, UK}
\email{graeme.wilkin@york.ac.uk}

\begin{abstract}
This paper studies the conductance on the universal homology covering space $Z$
of $2$D orbifolds in a strong magnetic field, thereby removing the integrality constraint on the magnetic field in earlier works
\cite{ASZ94,PrietoCMP,MathaiWilkin19} in the literature.  We consider a natural Landau Hamiltonian on $Z$ and 
study its spectrum which we prove consists of a finite number of low-lying isolated points and calculate the 
von Neumann degree of the associated
holomorphic spectral orbibundles when the magnetic field $B$ is large, and obtain
fractional quantum numbers as the conductance. 
\end{abstract}

\subjclass[2010]{Primary 32L81, Secondary 58J52 58J90 81Q10}
\keywords{fractional quantum numbers, Riemann orbifolds, holomorphic orbibundles, orbifold Nahm transform}

\maketitle

\date{\today}

\tableofcontents

\maketitle

\thispagestyle{empty}

\baselineskip=16pt

\section*{Introduction}

The fundamental work of Avron, Seiler, Zograf \cite{ASZ94} studies a class of  quantum systems on compact Riemann
surfaces for which the transport
coefficients simultaneously display quantization and fluctuation.
The two related notions of transport coefficients
that they consider are {\em conductance} and {\em charge transport}.
The mathematical tool used in their work is a local families index theorem due to Quillen
\cite{Quillen85, A-GMV86}. It splits the conductances
into two parts. The first is explicit and  universal, that is, it 
is, up to an integral  factor, the canonical symplectic form on the space of
Aharonov-Bohm fluxes and is quantized,  therefore providing a
connection to the Integer Quantum Hall effect (IQHE) \cite{ TKNN, ASY, Bellissard}.
 The second piece in the formula is a complete
derivative, hence it does not affect charge transport. It affects however
the conductance as a fluctuation term. In contrast to the first, it
depends on spectral properties of the Hamiltonian, as it is related
to the zeta function regularization of its determinant. In \cite{ASZ94}, Avron, Seiler and Zograf study
transport coefficients associated to the ground state for any compact Riemann surface $X$
where the magnetic field $B$ is a large integer. In \cite{PrietoCMP} the 
transport coefficients associated to the eigensections of low lying eigenvalues
are considered for any compact Riemann surface where the magnetic field $B$ is a large integer. 
In earlier work \cite{MathaiWilkin19}, the
transport coefficients associated to the eigensections of low lying eigenvalues
are considered for any compact complex 2D orbifold where the magnetic field $B$ is a large fraction. 

The goal of this paper is to remove the topological constraints on the magnetic field (either integrality or rationality) in earlier works
\cite{ASZ94,PrietoCMP,MathaiWilkin19} in the literature. To achieve this, we instead study the conductance and charge transport on the universal homology orbi-covering space $Z$
of $2$D orbifolds in a strong magnetic field. It turns out that there is a projective unitary action of $\ZZ^{2g}$ on $L^2(Z)$, known as magnetic translations, which commutes 
with the self-adjoint Hamiltonian
$$
H = \frac{\hbar^2}{2m}\left(\nabla_A^*\nabla_A +\frac{R}{6}\right) ,
$$
where $\nabla_A = d + i A$ is a connection on the trivial line bundle on $Z$ with curvature $(\nabla_A)^2= i \tilde B,$ and
$R$ is the constant scalar curvature of $Z$ in a hyperbolic metric. We also assume that the magnetic field $\tilde B$ is a constant multiple of the volume form $\theta\,d\vol_Z$ for some large value of
$\theta \in \RR$.
 Since $Z$ is a noncompact Riemann surface, the Hamiltonian $H$ acting
on $L^2(Z)$ typically has continuous spectrum. However, we show in Theorem \ref{thm:main-results} that there are finitely many eigenvalues $\{\mu_1, \ldots \mu_{m-1}\}$ of the Hamiltonian $H$ 
that are isolated in the spectrum of $H$ and which are near zero. Let $\{E_{\mu_j}, \, j=1,\ldots m-1\}$ denote the corresponding eigenspaces. In Theorem \ref{thm:main-results} we show that these eigenspaces consist of
holomorphic sections and compute the rank $r_j = \dim_\mathbb{C} E_{\mu_j}$. Let $\{P_j, \, j=1,\ldots m-1\}$ denote the respective orthogonal projections $P_j:L^2(Z) \to E_{\mu_j}$ and let $\tau$ denote the von Neumann trace. Then 
by \cite{MathaiRosenberg20}, $E_{\mu_j}$ has a $\dbar$-operator $\overline\nabla = P_j(\dbar)^{r_j}$, where $P_j \in M(r_j, A_\theta)$ and $A_\theta$ is a complex noncommutative torus in dimension $2g$, generated by the magnetic translations.
An open question is whether there is a connection on $E_{\mu_j}$ such that  $\overline\nabla^2=0$ that is  $\overline\nabla$ is a {\em flat $\dbar$ operator} in the sense of  \cite{MathaiRosenberg20}, which is known to be true in the commutative case as shown in \cite{PrietoCMP}. We remark that the Hamiltonian $H$ can be interpreted as a 
family of Hamiltonians parametrised by $A_\theta$
 which can be viewed as the noncommutative analog of the Jacobian variety of $X$.

Hyperbolic 2D surfaces
are typically prohibited experimentally as they cannot be isometrically embedded in three dimensional Euclidean space \cite{Hilbert01}. 
A recent striking development in \cite{KFH19} indisputably shows that lattices of certain resonators can be used to produce artificial photonic materials
 in an effectively curved space, including the 2D hyperbolic plane. 
  In particular, they conducted numerical tight-binding simulations of hyperbolic analogs of the Kagome lattice and demonstrated that they display a flat band,
  similar to that of their Euclidean counterpart. The authors of \cite{KFH19} also constructed a proof-of-principle experimental device which realizes a finite section of non-interacting heptagon-kagome lattice. 

We mention alternate approaches to (fractional) quantum numbers on hyperbolic space. These approaches are for smooth surfaces \cite{CHMM98, CHM99, CHM06}, for orbifolds \cite{MathaiMarcolli99, MathaiMarcolli01, MathaiMarcolli06}
for the bulk-boundary correspondence \cite{MathaiThiang19} and for orbifold symmetric products \cite{MarcolliSeipp17}. These papers use operator algebras and noncommutative geometry methods, 
in contrast to the holomorphic geometry methods used in this paper.
For a recent analysis of the IQHE, see \cite{KMMW}, where the
generating functional, the adiabatic curvature and the adiabatic phase for the IQHE 
are studied on a compact Riemann surface with integral magnetic field, but using holomorphic methods inspired by  \cite{ASZ94}.
\\

\noindent{\bf Acknowledgements.} VM thanks the Australian Research Council  for support
via the Australian Laureate Fellowship FL170100020. 
He gave a talk partly based on this paper at the conference, {\em Topological Phases of Interacting Quantum Systems}, BIRS, Oaxaca, 
Mexico, June 2--7, 2019. 
GW would also like to thank the University of Adelaide for their hospitality during the development of this paper. His visit was funded by FL170100020.

\section{Preliminaries}

\subsection{The maximal abelian cover}

Let $X$ be a compact Riemann surface of genus $g$ with a hyperbolic metric. The first homology is the abelianisation of $G := \pi_1(X)$
\begin{equation*}
H_1(X) = G / [G, G] .
\end{equation*}
Let $p : Z \rightarrow X$ be the maximal abelian cover with $\pi_1(Z) = [G, G]$. For any $x \in X$, we have $p^{-1}(x) \cong H_1(X)$, and so the commutator subgroup has infinite index.

A theorem of Griffiths \cite[4.2]{Griffiths63} shows that the commutator subgroup $[G,G]$ is then a free group. The homology $H_1(Z)$ is then the corresponding free abelian group, and hence the cohomology with coefficients in any ring $R$ is the dual $H^1(Z, R) \cong \Hom(H_1(Z), R)$. Putman \cite[Lem. A.1]{Putman07} shows that the generators of the commutator subgroup can be realised geometrically as loops on the surface that bound a one-holed torus.

The infinite cover $p : Z \rightarrow X$ defined above is induced from the Abel-Jacobi inclusion $X \hookrightarrow J(X)$ and the universal cover of the Jacobian
\begin{equation}\label{diag:abelian}
\begin{tikzcd}
Z \arrow{r}{\tilde\iota} \arrow{d}{p} & \widetilde{J(X)} \arrow{d}{q} \\
X \arrow{r}{\iota} & J(X)
\end{tikzcd}
\end{equation}
To see this, let $q : Z' \rightarrow X$ be the covering of $X$ induced from the universal cover of the Jacobian. Note that $Z'$ is connected. Then choose $x \in X$ and note that the action of $H_1(X) \cong H_1(J(X)) \cong \pi_X(J(X))$ on $p^{-1}(x) \cong H_1(X) \cong q^{-1}(x)$ is the same for both $Z$ and $Z'$. Therefore the two coverings are isomorphic.

\subsection{Magnetic translations}
The Abel-Jacobi inclusion $\iota: X \hookrightarrow J(X)$ induces an isomorphism $\iota^*: H^1(J(X), \mathbb{R})\cong H^1(X, \mathbb{R}) $, and the induced map 
$\iota^*: H^2(J(X), \mathbb{R})\twoheadrightarrow H^2(X, \mathbb{R}) $ is surjective. In fact, if the magnetic field is 
$B= \theta \omega$, where $\omega$ is the K\"ahler 2-form on $X$ and $\theta \in \RR$, then one has,
$$
[B]= [\theta \omega] = \iota^*\left[\frac{\theta}{g}\, \Theta_X\right] \in H^{1,1}(X, \RR)
$$
where $[\Theta_X] \in H^{1,1}(J(X), \RR)$ is the Theta divisor.

Let $B' = \frac{\theta}{g}\, \Theta_X$be the closed (1,1)-form on the Jacobian $J(X)$. Then $\tilde B'= q^*B'$ is a closed 2-form on $ \widetilde{J(X)}$. Since $ \widetilde{J(X)}$ is 
contractible, $H^2( \widetilde{J(X)}, \RR)=\{0\}$, 
so that $\tilde B'=dA'$ for some 1-form $A'$ on $\widetilde J(X)$.

Now $0=\gamma^*(\tilde B') - \tilde B'= d(\gamma^*A' - A')$, so that $\gamma^*A' - A'$ is a closed 1-form on $ \widetilde{J(X)}$ for all $\gamma \in \Gamma =H_1(X, \ZZ) =  H_1(J(X), \ZZ)$.
Since $ \widetilde{J(X)}$ is contractible, $H^1( \widetilde{J(X)}, \RR)=\{0\}$, so that $\gamma^*A' - A' = d\phi'_\gamma$, where $\phi'_\gamma$ is a smooth function on 
$\widetilde{J(X)}$ normalised by $\phi'_\gamma(\tilde{\iota}(x_0))=0$ for all $\gamma \in \Gamma$ and for some $x_0 \in Z$.

Define $A := \tilde\iota^*A' \in \Omega^1(Z)$ and $\phi_\gamma := \tilde\iota^*\phi'_\gamma$, and note that $\tilde B = p^*B = dA$. Moreover, $\phi_\gamma$ satisfies the identity
\begin{equation}\label{eqn:basic-phi-identity}
\phi_{\gamma_2}(x) + \phi_{\gamma_1}(\gamma_2 \cdot x) - \phi_{\gamma_1+\gamma_2}(x) = \phi_{\gamma_1}(\gamma_2 \cdot x_0) 
\end{equation}
for all $x \in Z$ and $\gamma_1, \gamma_2 \in \Gamma$. Now define $\sigma : \Gamma \times \Gamma \rightarrow U(1)$ by
\begin{equation}\label{eqn:def-sigma}
\sigma(\gamma_1, \gamma_2) = e^{i \phi_{\gamma_1}(\gamma_2 \cdot x_0)} .
\end{equation}
Then the above identity \eqref{eqn:basic-phi-identity} shows that $\sigma$ satisfies the cocycle condition
\begin{equation}\label{eqn:sigma-cocycle}
\sigma(\gamma_1, \gamma_2)\sigma(\gamma_1 + \gamma_2, \gamma_3)
= \sigma(\gamma_1, \gamma_2 + \gamma_3)\sigma( \gamma_2, \gamma_3), \quad \gamma_1, \gamma_2, \gamma_3\in \Gamma .
\end{equation}
For each $\gamma \in \Gamma$, define operators on $L^2(Z)$ by
\begin{align*}
U_\gamma f(x) & = f(\gamma^{-1}x)\\
S_\gamma f(x) & = e^{i \phi_\gamma(x)} f(x) .
\end{align*}
Then the above identities imply that $T_\gamma := U_\gamma \circ S_\gamma$ satisfies $T_{\gamma_1} T_{\gamma_2} = \sigma(\gamma_1, \gamma_2) T_{\gamma_1\gamma_2}$, and hence defines a twisted or projective action of $\Gamma$ on $L^2(Z)$.

\subsection{The magnetic Schr\"odinger operator}\label{subsec:magnetic-schrodinger}
Let $\nabla_A = d + i A$ be a connection on the trivial line bundle on $Z$. Then the curvature of $\nabla_A$ is $(\nabla_A)^2= i \tilde B$. Consider the magnetic Schr\"odinger operator
\begin{equation}\label{eqn:magnetic-Schrodinger}
H = \frac{\hbar^2}{2m}\left(\nabla_A^*\nabla_A +\frac{R}{6}\right) ,
\end{equation}
where $R$ is the constant scalar curvature of $Z$. Then $U_\gamma \nabla_A = \nabla_{\gamma^{-1}*A} U_\gamma$ and 
$S_\gamma \nabla_{\gamma^{-1}*A} = \nabla_A S_\gamma$, so that $T_\gamma H = H T_\gamma$ for all $\gamma \in \Gamma$.

Notice that $H$ is bounded below, so the spectral projection $$P_\lambda = \chi_{(-\infty, \lambda)}(H) \in W^*(\Gamma, \sigma) \otimes B(L^2(\cF)),$$
where $\cF$ is a connected fundamental domain for the action of $\Gamma$ on $Z$, $ B(L^2(\cF))$ denotes the bounded operators on the Hilbert space
 $L^2(\cF)$, and $W^*(\Gamma, \sigma) $ denotes the $\sigma$-twisted group von Neumann algebra of $\Gamma$ generated by the magnetic translations acting on 
 $\ell^2(\Gamma)$.
 
If however $\lambda$ is in a spectral gap of $H$, then it is a standard result (cf. \cite[Thm. 1]{BruningSunada92}) that the spectral projection belongs to the much smaller algebra,
 $$P_\lambda = \chi_{(-\infty, \lambda)}(H) = f(H)\in C^*(\Gamma, \sigma) \otimes \cK(L^2(\cF)),$$
where $ \cK(L^2(\cF))$ denotes the compact operators on the Hilbert space
 $L^2(\cF)$, and $C^*(\Gamma, \sigma) $ denotes the $\sigma$-twisted group $C^*$-algebra of $\Gamma$
 and $f$ is a holomorphic function defined in a neighbourhood of $(-\infty, \lambda]$.
 When $\tilde B$ is a constant multiple $\theta\, d\vol_Z$ of the hyperbolic volume, which is the case that we will focus on in this paper,
 then $C^*(\Gamma, \sigma) $  is the noncommutative torus $A_\theta$
in dimension $2g$.
 \\

 \subsection{The noncommutative torus}\label{subsec:noncommutative-torus}
 
 Here we recall the definition of the higher dimensional noncommutative torus, its complex structure and K-theory, and the range of the trace and also the 2-trace on K-theory.
  Let $p=2g$ and $\Theta$ be a $(p\times p)$ skew-symmetric matrix.  
 Then $$\sigma(\gamma, \gamma') = \exp\left(2\pi \sqrt{-1} \sum_{j<k} \Theta_{jk} \gamma_j \gamma'_k\right), \qquad \text{where}\quad \ \gamma=(\gamma_1, \ldots, \gamma_p), 
  \gamma'=(\gamma_1', \ldots, \gamma_p') \in \ZZ^p.$$
 is a $U(1)$-valued group 2-cocycle on $\ZZ^p$. Let $\CC(\ZZ^p, \sigma)$ denote the twisted group algebra, that is for functions $f_1, f_2 \colon: \ZZ^p \to \CC$ of finite support, the twisted convolution product is $$f_1\star_\sigma f_2(\gamma) = \sum_{\gamma_1 + \gamma_2 = \gamma} f_1(\gamma_1) f_2(\gamma_2) \sigma(\gamma_1, \gamma_2)$$
 Then $\CC(\ZZ^p, \sigma)$ acts on bounded operators on $\ell^2(\ZZ^p)$ by the formula above. The operator norm closure $\overline{\CC(\ZZ^p, \sigma)}$ is defined to be the noncommutative torus $A_\Theta$ or the twisted group $C^*$-algebra $C^*(\ZZ^p, \sigma)$.
 
 There is an abstract definition of $A_\Theta$ that is useful to recall.  The 
\emph{noncommutative torus} $A_\Theta$ is the universal $C^*$-algebra with 
$p$ unitary generators $U_j$, $1\le j\le p$, subject to the basic commutation
relation
\[
U_jU_k = e^{2\pi i \Theta_{jk}} U_kU_j.
\]
This algebra carries a \emph{gauge action} of the torus $\TT^p$ via 
\[
t\cdot \left(U_1^{n_1}\cdots U_d^{n_d} \right)
= t_1^{n_1}\cdots t_d^{n_d} U_1^{n_1}\cdots U_d^{n_d} , \quad
t = (t_1, \cdots, t_d)\in \TT^p.
\]
There are associated infinitesimal generators $\delta_j$, 
which are $*$-derivations, defined by
$$
\delta_j(U_k) = 2\pi i \delta_{jk} U_k .
$$
The algebra $A_\Theta$ also carries a canonical tracial state $\tau$ invariant 
under the gauge action, sending $1$ to $1$ and sending a monomial
$U_1^{n_1}\cdots U_d^{n_p}$ to $0$ unless all of the $n_j$ vanish.
Because of the commutation relation, any element of
$A_\Theta$ has a canonical (formal) expansion in terms of
the monomials $U_1^{n_1}\cdots U_d^{n_p}$.  Since every element of $A_\Theta$ 
has a unique expression $ \sum a_{\vec{n}} \, U_1^{n_1}\cdots U_d^{n_p}$, so 
$$
\tau\left(\sum_{\vec{n}} a_{\vec{n}} \, U_1^{n_1}\cdots U_d^{n_p}\right) = a_{\vec{0}}
$$
The {\em smooth noncommutative torus} $\cA_\Theta$ consists of all elements 
$\sum a_{\vec{n}} \, U_1^{n_1}\cdots U_d^{n_p}$ such that the coefficients 
$a_{\vec{n}}$ form a rapidly decreasing sequence in $\cS(\ZZ^p)$. It is also the 
smooth vectors for the gauge action of $\TT^p$ on $A_\Theta$. It follows that 
$\cA_\Theta$ is the domain of the (powers of) derivations $\delta_j^k$ that are the infinitesmal generators of the gauge action. 
There is a natural continuous cyclic 2-cocycle $\tau_c$ on $\cA_\Theta$. Let $f_0, f_1, f_2 \in \cA_\Theta$. Then
\begin{equation}\label{eqn:continuous-cocycle-def}
\tau_c(f_0, f_1, f_2) = \sum_{i=1}^g \tau(f_0(\delta_i(f_1)\delta_{i+g}(f_2) - (\delta_{i+g}(f_1)\delta_{i}(f_2))).
\end{equation}
where $c$ is the area 2-cocycle on $\ZZ^p$ corresponding to the symplectic form. 
 
 The inclusion $\cA_\Theta \hookrightarrow A_\Theta$ is known to induce an isomorphism in K-theory, $K_\bullet(A_\Theta) \cong K_\bullet(\cA_\Theta)$. The range of the trace on K-theory has been computed \cite{BenameurMathai18, Elliott80}:
 \be\label{rangeoftrace}
\tau (K_0(A_\Theta) )= \ZZ +\sum_{0<|I|<p} {\rm Pf}(\Theta_I) \ZZ  + {\rm Pf}(\Theta)\ZZ,
\ee
where $I$ runs over subsets of $\{1,\ldots,p\}$ with an even number of elements, and $\Theta_I$ denotes the skew-symmetric submatrix of 
$\Theta=(\Theta_{ij})$ with $i,j \in I$. The formula (see section 1 in \cite{MathaiQuillen86})
$$
e^{\frac{1}{2}dx^t \Theta dx} = \sum_I {\rm Pf}(\Theta_I) dx^I
$$
is key to this computation, together with the twisted L$^2$-index theorem \cite{Mathai99}.

 We can also compute the range of a certain higher trace on K-theory. Let $\Theta_X$ denote the theta divisor. In real coordinates,
 $\Theta_X = \sum_{i=1}^g dx_i\wedge dx_{i+g}$. Let $c$ be the group cocycle that corresponds to $\Theta_X $. Then setting 
 $I_i = \{i, i+g\}$, one has
 \be\label{rangeofhighertrace}
\tau_c (K_0(\cA_\Theta) )= \sum_{i=1}^g \sum_{I_i\subset I} {\rm Pf}(\Theta_{I\setminus I_i}) \ZZ  
\ee
where $I$ runs over subsets of $\{1,\ldots,p\}$ with an even number of elements, and $\Theta_{I\setminus I_i}$ denotes the skew-symmetric submatrix of $\Theta=(\Theta_{ij})$ with $i,j \in I\setminus I_i$. 
The method of proof again uses \cite{BenameurMathai18,MathaiMarcolli01} and can be deduced from Corollary 5.7.2 in \cite{ProdanSchulz-Baldes16}.

 \subsection{Noncommutative complex torus and holomorphic vector bundles} \label{complexnctori}
 The following is recalled from \cite{MathaiRosenberg20}.
 Define a tangent space of $\cA_\Theta$ to be the
(commutative) Lie algebra $\fg = \text{span}\,(\delta_1,\cdots, \delta_{2n})$.
A \emph{complex structure} on $A_\Theta$ and $\cA_\Theta$
is a choice of an endomorphism $J$ of $\fg$ satisfying $J^2 = -1$.
It thus defines an isomorphism $\fg_\CC\cong \fg^{\text{hol}}
\oplus \fg^{\text{antihol}}$
as a direct sum of holomorphic and antiholomorphic tangent spaces,
namely the $\pm i$-eigenspaces of $J$.  There is a similar splitting
of the complexified cotangent space $\fg_\CC^*$.
The pair $(A_\Theta, J)$ will be called a \emph{{noncommutative} complex torus}
of complex dimension $n$.

A \emph{vector bundle} $E$ over $\cA_\Theta$ will mean a finitely generated
projective (right) module.  
A \emph{holomorphic vector bundle} $E$ will mean such a bundle
equipped with a \emph{holomorphic connection} $\overline\nabla$,
meaning a map $E\to E \otimes \left(\fg^{\text{antihol}}\right)^*$
satisfying the Leibniz rule
\begin{equation}
\label{eq:Leibniz}
\overline\nabla_{\bar\partial_j}(e\cdot a) =
\overline\nabla_{\bar\partial_j}(e)\cdot a +
e \cdot \bar\partial_j(a).
\end{equation}
Note that any vector bundle of rank $r$ can be equipped with a
holomorphic connection simply by writing
$E = p (\cA_\Theta)^r$ for some projection $p$, and then
defining $\overline\nabla = p(\bar\partial)^r$.

More interesting are \emph{flat holomorphic connections}, which 
satisfy the flatness condition $(\overline\nabla)^2=0$.
It is {\em not} the case that every vector bundle has a 
flat holomorphic connection.

\section{Summary of results}
 
This section contains the main results of the paper.
 
 \begin{theorem}\label{thm:main-results}
 
 Let $H$ be the magnetic Schr\"odinger operator of \eqref{eqn:magnetic-Schrodinger}.
 
\begin{enumerate}

\item \label{item:discrete-spectrum} Let $m>0$ be the largest integer such that $\theta - m (2g-2) > 0$. 
If $\lambda = \mu_q$ for some integer $q$ such that $0\le q<m$, where 
  \begin{equation*}
  \mu_q = (2q+1) \theta- q(q+1) (2g-2),
  \end{equation*}
  then $\lambda = \mu_q$ is an isolated point in the spectrum of $H$.
It follows that the spectral projection $P_\lambda \in A_\theta \otimes \cK(L^2(\cF)).$

\bigskip

\item \label{item:holomorphic-eigensections} The eigenspace $E_\lambda = \Im (P_\lambda)$ for $\lambda$ as above, consists of holomorphic sections. 

\bigskip

\item \label{item:von-Neumann-dimension} The von Neumann dimension of $E_\lambda$, $\dim_\tau(E_\lambda)$, is equal to $\tau(P_\lambda)$, where $\tau:  A_\theta\otimes \cK(L^2(\cF)) \to \CC$ is the von Neumann trace. Then $E_{\mu_q}$ is infinite dimensional since we show that
\begin{equation*}
\dim_\tau(E_{\mu_q}) =( 2 q + 1) (1-g) + \theta >0 ,
\end{equation*}
and so the von Neumann dimension $\dim_\tau(E_{\mu_q})$ of the spectral subspace grows linearly with $q$.

\bigskip

\item \label{item:chern-number} Let $\lambda=\mu_q$. The Chern number of $E_\lambda$ is
\begin{equation}\label{eqn:chern-number-eigenspace}
\tau_2(P_\lambda, P_\lambda, P_\lambda)=\tau(P_\lambda dP_\lambda dP_\lambda) = 2g .
\end{equation}
In particular, it is an integer and it is independent of $q$. In the case of an orbifold given as a quotient $X = X' / \Gamma$ of a smooth Riemann surface $X'$ by a finite group $\Gamma$, the Chern number is a rational number
\begin{equation}\label{eqn:chern-number-orbifold}
2g-2 + \# (R / \Gamma) + \frac{2-n}{|\Gamma|} ,
\end{equation}
where $R$ is the ramification divisor of the ramified cover $X' \rightarrow X$. Here we use the well known fact that the noncommutative torus $C^*(\Gamma, \sigma)$ comes with a canonical cyclic 2-cocycle, $$\tau_2(f_0, f_1, f_2)=\sum_{j=1}^g \tau(f_0 \delta_jf_1 \delta_{j+g} f_2 - \delta_{j+g}f_1 \delta_{j} f_2 ))$$
 for $f_k \in \CC(\Gamma, \sigma)$. This is derived to be the conductance 2-cocycle $\tau_K$ in Corollary 5, \cite{CHMM98}, see also page 73 in \cite{MathaiMarcolli01}.
 
\end{enumerate}

\end{theorem}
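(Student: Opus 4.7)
The plan is to realise each eigenspace $E_{\mu_q}$ as the kernel of a shifted Dolbeault-type operator on $Z$, and then compute its $L^2$-trace and its higher Connes pairing in the $\Gamma$-equivariant setting. The starting point is the Bochner--Kodaira identity on the K\"ahler manifold $Z$, which combined with the scalar-curvature correction $R/6$ and the constancy of $\tilde B = \theta\,d\vol_Z$ yields a factorisation
\[
\tfrac{2m}{\hbar^2}(H - \mu_q) \;=\; 2\,\overline D_q^{\,*}\,\overline D_q ,
\]
where $\overline D_q$ is the $\bar\partial$-operator of the twisted line bundle $L\otimes K^{-q}$, with $L$ the trivial line bundle carrying the magnetic connection $\nabla_A$ and $K$ the pull-back to $Z$ of the canonical bundle of $X$. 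This is the standard hyperbolic Landau ladder: on the universal cover $\mathbb H$ of $Z$ the spectrum is exactly the finite discrete set $\{\mu_0,\dots,\mu_{m-1}\}$ below the continuous spectrum, whose threshold is the value at which $\theta - m(2g-2) = 0$.

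Parts \eqref{item:discrete-spectrum} and \eqref{item:holomorphic-eigensections} then follow immediately. For \eqref{item:discrete-spectrum}, the $\Gamma$-equivariance of $H$ ensures that the point spectrum of the Landau Hamiltonian on $\mathbb H$ descends to $L^2(Z)$, and each $\mu_q$ with $q<m$ remains isolated because the continuous-spectrum threshold is unchanged by passage to an intermediate cover; the Br\"uning--Sunada result quoted after \eqref{eqn:magnetic-Schrodinger} then places $P_{\mu_q}\in A_\theta\otimes\cK(L^2(\cF))$. For \eqref{item:holomorphic-eigensections}, the factorisation identifies $E_{\mu_q}=\ker\overline D_q$ with the space of $L^2$-holomorphic sections of $L\otimes K^{-q}$ on $Z$.

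For \eqref{item:von-Neumann-dimension}, apply the twisted $L^2$-index theorem of \cite{Mathai99, BenameurMathai18} to $\overline D_q$. For $\theta$ large a Kodaira-type vanishing argument on the fundamental domain $\cF$ kills the $L^2$-cokernel, reducing the $\Gamma$-index to $\dim_\tau E_{\mu_q}$, and integration of the index density over $\cF$ gives
\[
\dim_\tau E_{\mu_q} \;=\; \deg L \;-\; q\deg K \;+\; (1-g) \;=\; \theta + (2q+1)(1-g),
\]
since $\deg L = \theta$ and $\deg K = 2g-2$.

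For \eqref{item:chern-number}, identify $\tau_2(P_{\mu_q},P_{\mu_q},P_{\mu_q})$ as the Connes--Chern pairing of the class $[P_{\mu_q}]\in K_0(A_\theta\otimes\cK)$ with the cyclic 2-cocycle $\tau_c$ built from the intersection form on $\Gamma=H_1(X,\ZZ)$. By the higher $L^2$-index theorem \eqref{rangeofhighertrace} (cf.\ \cite{MathaiMarcolli01}), this pairing reduces to the integral over $X$ of $\mathrm{ch}(L\otimes K^{-q})$ wedged with a de Rham representative of $c$. Since $\iota^*[\Theta_X] = g\,[\omega]$ from the identification after \eqref{diag:abelian}, a short computation (which picks up an extra factor of two from the antisymmetrisation in the definition of $\tau_c$) yields the constant $2g$, independently of both $q$ and $\theta$; the orbifold formula \eqref{eqn:chern-number-orbifold} then follows by applying Kawasaki's orbifold index theorem to the ramified cover $X'\to X$. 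This last item is the main technical obstacle: turning the abstract Connes--Chern pairing into a concrete cohomological integral on $X$, and matching all normalisations so that the answer comes out to exactly $2g$ (respectively the stated rational number in the orbifold case), requires careful bookkeeping of the Abel--Jacobi identification, the $\Gamma$-twisted index formula, and the normalisation of the trace on $A_\theta$; the remaining items are essentially routine once the Bochner factorisation is in place.
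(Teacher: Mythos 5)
Your treatment of parts \eqref{item:von-Neumann-dimension} and \eqref{item:chern-number} follows essentially the same route as the paper: the twisted $L^2$-index theorem of \cite{Mathai99} gives $\dim_\tau E_{\mu_q}=\theta+(2q+1)(1-g)$ after a vanishing argument for $h^1$, and the higher twisted index theorem of \cite{MathaiMarcolli01} paired with the area $2$-cocycle $\tau_c$ produces $\int_X\phi(c)=2g$, with the orbifold case handled by Riemann--Hurwitz. These parts are fine, modulo normalisation bookkeeping that you yourself flag.

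The genuine gap is in parts \eqref{item:discrete-spectrum} and \eqref{item:holomorphic-eigensections}. You propose to deduce the discrete spectrum of $H$ on $L^2(Z)$ from the classical Landau spectrum on $\mathbb{H}$, asserting that ``the $\Gamma$-equivariance of $H$ ensures that the point spectrum of the Landau Hamiltonian on $\mathbb{H}$ descends to $L^2(Z)$'' and that ``the continuous-spectrum threshold is unchanged by passage to an intermediate cover.'' Neither of these is a theorem you can invoke. The covering $\mathbb{H}\to Z$ has deck group $\pi_1(Z)=[\pi_1(X),\pi_1(X)]$, an \emph{infinitely generated} free group, which is nonamenable; consequently there is no general spectral-transfer result guaranteeing that the $L^2$-spectrum on $Z$ is contained in, or has the same continuous-spectrum threshold as, that on $\mathbb{H}$. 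The paper records precisely this warning in the remark following Theorem \ref{thm:discrete-spectrum}: the coincidence of the low-lying $L^2$-eigenvalues on $Z$ with Comtet--Houston's $\mathbb{H}^2$ values is a \emph{nontrivial observation}, not a consequence of equivariance. The paper's actual argument never leaves $Z$: it decomposes $\Omega^0(\tilde{\mathcal{E}}\otimes K^q)\cong\ker\bar\partial^{\nabla^{\theta,q}}\oplus\overline{\im\,\partial^{\nabla^{\theta,q-1}}}$ (Lemma \ref{lem:decompose-sections}), establishes the curvature--commutator identities of Lemma \ref{lem:failure-to-commute}, and then in Proposition \ref{prop:spectral-gaps} bounds the Rayleigh quotient of $(\nabla^\theta)^*\nabla^\theta$ from below on $\overline{\im\,\partial^{\nabla^{\theta,-1}}\circ\cdots\circ\partial^{\nabla^{\theta,-(q+1)}}}$ by an inductive inner-product estimate. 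That is what produces the spectral gap at each $\mu_q$, and it is the step your proposal replaces with an unjustified transfer.

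There is also a smaller but real inaccuracy in your Bochner ``factorisation'' $\tfrac{2m}{\hbar^2}(H-\mu_q)=2\,\overline D_q^{\,*}\overline D_q$ with $\overline D_q$ the $\bar\partial$-operator on $L\otimes K^{-q}$. As written this cannot be an operator identity: $H$ acts on sections of $L$, while $\overline D_q^{\,*}\overline D_q$ acts on sections of $L\otimes K^{-q}$. What is actually true, and what the paper uses, is the $q=0$ Bochner identity $(\nabla^\theta)^*\nabla^\theta=2\Delta^\theta_{\bar\partial}+\theta$ on $\Omega^0(\tilde{\mathcal{E}})$; for $q\ge1$ one does not factor $H-\mu_q$, but rather shows that $E_{\mu_q}$ is \emph{isomorphic} to $\ker\bar\partial^{\nabla^{\theta,-q}}$ via the ladder $\partial^{\nabla^{\theta,-1}}\circ\cdots\circ\partial^{\nabla^{\theta,-q}}$ (which is injective by Remark \ref{rem:injective-10} and commutes with the magnetic translations $T_\gamma$). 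That isomorphism is the conclusion of Proposition \ref{prop:spectral-gaps}, not its starting point.
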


The subsequent sections contain the proof of these results. Parts \eqref{item:discrete-spectrum} and \eqref{item:holomorphic-eigensections} follow from Theorem \ref{thm:discrete-spectrum}, part \eqref{item:von-Neumann-dimension} is proved in Lemma \ref{dim} and the results of part \eqref{item:chern-number} are contained in \eqref{eqn:chern-number-smooth-proof} and \eqref{eqn:chern-number-orbifold-proof}.

\section{Discrete values of the spectrum of the Laplacian on the maximal abelian cover}

The main result of this section is Theorem \ref{thm:discrete-spectrum}, proving the first two parts of Theorem \ref{thm:main-results}.

Let $\mathcal{E} \rightarrow X$ be a complex line bundle. Since $Z$ is a noncompact surface then the pullback $\tilde{\mathcal{E}}$ is topologically trivial, and any holomorphic structure on $\tilde{\mathcal{E}}$ is also trivial (see for example \cite[Thm. 30.4]{Forster81}), however $\tilde{\mathcal{E}}$ admits many different gauge-equivalent structures as a holomorphic bundle, or equivalently as a Hermitian bundle with a unitary connection. In order to normalise the eigenvalues of the Laplacian, we will fix a metric on $X$ of constant Gauss curvature $\chi(X)$ so that $\vol(X) = 2\pi$, and use the pullback metric on the maximal abelian cover $Z \rightarrow X$. Fix a Hermitian metric on $\tilde{\mathcal{E}}$ and a Hermitian connection $\nabla^\theta$, and let $\theta = i*F_{\nabla^\theta} \in \mathbb{R}$ be the curvature, which we assume from now on to be constant. In general $\theta$ can be any real number; in the special case that $\nabla^\theta$ is the pullback of a constant curvature connection on a line bundle $\mathcal{E} \rightarrow X$, then the Chern-Weil formula $\deg(\mathcal{E}) = \frac{i}{2\pi} \int_X * F_{\nabla^\theta} = \theta$ shows that $\theta = \deg(\mathcal{E}) \in \mathbb{Z}$.

Elliptic regularity shows that the eigensections of $(\nabla^\theta)^* \nabla^\theta$ are smooth. In the following, we will use the Sobolev space $H^2(Z, \mathcal{E}) \subset L^2(Z, \mathcal{E})$ as the domain of the Laplacian, on which $(\nabla^\theta)^* \nabla^\theta$ is self-adjoint, and continuous as an operator $H^2(Z, \mathcal{E}) \rightarrow L^2(Z, \mathcal{E})$.

The main result of this section is Theorem \ref{thm:discrete-spectrum}, which shows that the low-lying eigenvalues of the Laplacian $(\nabla^\theta)^* \nabla^\theta$ are the discrete values given by \eqref{eqn:explicit-eigenvalues}, and that the corresponding eigensections are images of holomorphic sections of the associated bundle $K^{-q} \otimes \tilde{\mathcal{E}}$ given by \eqref{eqn:explicit-eigensections}.

Decomposing $\nabla^\theta$ into $(0,1)$ and $(1,0)$ parts induces operators $\partial^{\nabla^\theta}$ and $\bar{\partial}^{\nabla^\theta}$, which satisfy the following identities for the Laplacian and the curvature
\begin{equation}\label{eqn:decompose-laplacian}
(\nabla^\theta)^* \nabla^\theta = \Delta_{\bar{\partial}}^\theta + \Delta_{\partial}^\theta, \quad i * F_{\nabla^\theta} = \Delta_{\partial}^\theta - \Delta_{\bar{\partial}}^\theta .
\end{equation}
Combining these shows that
\begin{equation}\label{eqn:Laplacian-curvature}
(\nabla^\theta)^* \nabla^\theta = 2 \Delta_{\bar{\partial}}^\theta + i*F_{\nabla^\theta} = 2 \Delta_{\bar{\partial}}^\theta + \theta ,
\end{equation}
and so the eigensections of $(\nabla^\theta)^* \nabla^\theta$ with eigenvalue $\mu$ correspond exactly to eigensections of $\Delta_{\bar{\partial}}^\theta$ with eigenvalue $\frac{1}{2} (\mu - \theta)$. The first consequence of this is that $\mu \geq \theta$, since $\Delta_{\bar{\partial}}^\theta$ is non-negative. Secondly, we see that the sections in the kernel of $\Delta_{\bar{\partial}}^\theta$ (corresponding to the holomorphic sections of $\mathcal{E}$) are eigensections of $(\nabla^\theta)^* \nabla^\theta$ with eigenvalue $\theta$.

The basic example of such a connection on a trivial bundle with trivial metric on $\mathbb{C}^g$ is
\begin{equation*}
\bar{\partial}^{\nabla^\theta} = \bar{\partial} + \frac{1}{4} i \theta z d \bar{z}, \quad \partial^{\nabla^\theta} = \partial - \frac{1}{4} i \bar{z} dz ,
\end{equation*}
where we use the shorthand $z d\bar{z} := \sum_{k=1}^g z_k d \bar{z}_k$, $\bar{z} dz := \sum_{k=1}^g \bar{z}_k dz_k$. One can easily check that $e^{-\frac{1}{4} i \theta |z|^2}$ is in $\ker \bar{\partial}^{\nabla^\theta}$ and therefore is an eigenfunction of $(\nabla^\theta)^* \nabla^\theta$ with eigenvalue $\theta$. Since the inclusion $Z \hookrightarrow \mathbb{C}^g$ induced from the Abel-Jacobi map is a holomorphic embedding, then the restriction of $e^{-\frac{1}{4} i \theta |z|^2}$ to $Z \subset \mathbb{C}^g$ is also in $\ker \bar{\partial}^{\nabla^\theta}$ and therefore an eigenfunction of the Laplacian.

Therefore the lowest eigenvalue of $(\nabla^\theta)^* \nabla^\theta$ is determined by the curvature $\theta$, and the corresponding eigensections are determined by the holomorphic sections of $\mathcal{E}$. The next theorem extends this result to show that the higher eigenvalues of $(\nabla^\theta)^* \nabla^\theta$ also have an explicit description. Most importantly, they are discrete in the interval $[0, \mu_m)$, where $m$ and $\mu_m$ are defined below. For compact surfaces, this result is due to Prieto in \cite{PrietoDGA}, and the proof below involves extending these techniques to apply to the noncompact infinite genus surface $Z$. 

First, we set some notation. Let $T$ be the tangent bundle of $Z$, which we assumed to have constant Gauss curvature $\chi(X)$. The bundle $T$ also has a canonical holomorphic structure and Hermitian metric induced from that of $Z$, and the dual is identified with the canonical bundle $K$, which then has an induced connection. Define 
\begin{equation*}
\nabla^{\theta, q} := \tilde\nabla^{K^{q}} \otimes \nabla^\theta
\end{equation*}
to be the connection on $\tilde{\mathcal{E}} \otimes K^q$ for any $q \in \mathbb{Z}$, and decompose into $(1,0)$ and $(0,1)$ parts to define the associated operators
\begin{align*}
\partial^{\nabla^{\theta, q}} : \Omega^0(\tilde{\mathcal{E}} \otimes K^q) & \rightarrow \Omega^{1,0}(\tilde{\mathcal{E}} \otimes K^q) \cong \Omega^0(\tilde{\mathcal{E}} \otimes K^{q+1}) \\
\bar{\partial}^{\nabla^{\theta, q}} : \Omega^0(\tilde{\mathcal{E}} \otimes K^q) & \rightarrow \Omega^{0,1}(\tilde{\mathcal{E}} \otimes K^q) \cong \Omega^0(\tilde{\mathcal{E}} \otimes K^{q-1}) .
\end{align*}

\begin{remark}\label{rem:injective-10}
In analogy with \eqref{eqn:decompose-laplacian}, note that the Laplacians satisfy the identity
\begin{equation}
i * F_{\nabla^{\theta, q}} = \Delta_{\partial^{\theta, q}} - \Delta_{\bar{\partial}^{\theta, q}} .
\end{equation}
In particular, if $\theta - q (2g-2)  > 0$, then $\partial^{\nabla^{\theta,-q}} : \Omega^0(\tilde{\mathcal{E}} \otimes K^{-q}) \rightarrow \Omega^0(\tilde{\mathcal{E}} \otimes K^{-q+1})$ is injective, since $\Delta_{\partial^{\theta,-q}} = \theta - q(2g-2)  + \Delta_{\bar{\partial}^{\theta,-q}}$ is strictly positive.
\end{remark}

We then have two sequences of homomorphisms given by composing these operators as $q$ increases or decreases 
\begin{equation*}
\begin{tikzcd}
\Omega^0(\tilde{\mathcal{E}} \otimes K^{q-1})  \arrow[bend left = 10]{r}{\partial^{\nabla^{\theta, q-1}}} & \Omega^0(\tilde{\mathcal{E}} \otimes K^{q}) \arrow[bend left = 10]{r}{\partial^{\nabla^{\theta, q}}} \arrow[bend left = 10]{l}{\bar{\partial}^{\nabla^{\theta, q}}} & \Omega^0(\tilde{\mathcal{E}} \otimes K^{q+1}) \arrow[bend left = 10]{r}{\partial^{\nabla^{\theta, q+1}}} \arrow[bend left = 10]{l}{\bar{\partial}^{\nabla^{\theta, q+1}}} & \Omega^0(\tilde{\mathcal{E}}  \otimes K^{q+2}) \arrow[bend left = 10]{l}{\bar{\partial}^{\nabla^{\theta, q+2}}}
\end{tikzcd}
\end{equation*}

\begin{theorem}\label{thm:discrete-spectrum}
Fix a metric on $Z$ with constant Gauss curvature $\chi(X) = 2-2g$, and let $m$ be the largest integer such that $\theta - m(2g-2) > 0$. For each integer $0 \leq q \leq m$, define
\begin{equation}\label{eqn:explicit-eigenvalues}
\mu_q = (2q+1) \theta - q(q+1) (2g-2) .
\end{equation}
Then the spectrum of the Laplacian in the interval $[0, \mu_m)$ consists of the discrete eigenvalues $\mu_q$ for each $q = 0, \ldots, m-1$. For each such eigenvalue $\mu_q$, the corresponding space of eigensections $E_{\mu_q} $ is equal to 
\begin{equation}\label{eqn:explicit-eigensections}
E_{\mu_q}  =\overline{\partial^{\nabla^{\theta,-1}} \circ \cdots \circ \partial^{\nabla^{
\theta,-q}} \left( \ker \bar{\partial}^{\nabla^{\theta,-q}} \right)} \subset \Omega^0(\tilde{\mathcal{E}}) .
\end{equation}
In particular, $E_{\mu_q} $ and $ \ker \bar{\partial}^{\nabla^{\theta,-q}}$
are isomorphic as $A_\theta$ modules.

\end{theorem}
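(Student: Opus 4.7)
The plan is to adapt Prieto's compact-surface argument to the noncompact setting of $Z$ by exploiting the magnetic translation structure and the fact that $Z$ is a holomorphic embedding into $\widetilde{J(X)}$. The strategy is to build the eigensections as images of holomorphic sections of the twisted bundles $\tilde{\mathcal{E}}\otimes K^{-q}$ under iterated raising operators $\partial^{\nabla^{\theta,-q}}$, and then to show that no other eigenvalues appear in the window $[0,\mu_m)$ by a lowering argument.

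First I would derive the fundamental intertwining relation on $Z$:
\begin{equation*}
\Delta_{\bar\partial^{\theta,-q+1}}\circ\partial^{\nabla^{\theta,-q}}
= \partial^{\nabla^{\theta,-q}}\circ\bigl(\Delta_{\bar\partial^{\theta,-q}}+\theta-q(2g-2)\bigr),
\end{equation*}
which follows from the Kähler identity together with the curvature computation $i*F_{\nabla^{\theta,-q}}=\theta-q(2g-2)$ (noting $K$ has curvature $2g-2$ under our normalization). Starting from $s\in\ker\bar\partial^{\nabla^{\theta,-q}}$ and iterating this identity $q$ times along the chain, a telescoping sum gives the $\Delta_{\bar\partial^{\theta,0}}$-eigenvalue $\sum_{k=1}^{q}(\theta-k(2g-2))=q\theta-\tfrac{q(q+1)}{2}(2g-2)$ on the image, which via the Weitzenböck identity \eqref{eqn:Laplacian-curvature} yields exactly $\mu_q=(2q+1)\theta-q(q+1)(2g-2)$. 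Injectivity of each composition is guaranteed by Remark \ref{rem:injective-10} since $\theta-k(2g-2)>0$ for $k\le m$, so the map $\ker\bar\partial^{\nabla^{\theta,-q}}\to E_{\mu_q}$ is a well-defined injection. Existence of enough holomorphic sections follows from the topological triviality of $\tilde{\mathcal{E}}\otimes K^{-q}$ on the noncompact surface $Z$, exhibited explicitly by restricting $e^{-\frac14 i\theta'|z|^2}$-type Gaussian sections from the Abel-Jacobi embedding $Z\hookrightarrow\widetilde{J(X)}\cong\mathbb{C}^g$.

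For the converse inclusion---that the spectrum of $(\nabla^\theta)^*\nabla^\theta$ in $[0,\mu_m)$ consists only of the $\mu_q$---I would apply the dual lowering argument. Given any eigensection $\phi\in H^2(Z,\tilde{\mathcal{E}})$ of $(\nabla^\theta)^*\nabla^\theta$ with eigenvalue $\lambda<\mu_m$, the identity \eqref{eqn:Laplacian-curvature} shows $\Delta_{\bar\partial^{\theta,0}}\phi=\tfrac12(\lambda-\theta)\phi$. If $\bar\partial^{\nabla^{\theta,0}}\phi=0$ then $\lambda=\mu_0=\theta$; otherwise $\bar\partial^{\nabla^{\theta,0}}\phi\in\Omega^0(\tilde{\mathcal{E}}\otimes K^{-1})$ is again an eigensection (for a Laplacian whose eigenvalue is shifted by a computable amount derived from the analog of the intertwiner with $\bar\partial$). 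Iterating $q$ times either terminates in $\ker\bar\partial^{\nabla^{\theta,-q}}$, forcing $\lambda=\mu_q$, or produces an infinite sequence of eigensections with eigenvalues strictly decreasing in $\Delta_{\bar\partial^{\theta,-q}}$, eventually becoming negative when $\theta-q(2g-2)<0$, contradicting positivity of $\Delta_{\bar\partial}$. This pigeonholes $\lambda$ into the discrete set $\{\mu_0,\ldots,\mu_{m-1}\}$.

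The $A_\theta$-module isomorphism in the final claim follows because all operators $\partial^{\nabla^{\theta,-q}}$ and $\bar\partial^{\nabla^{\theta,-q}}$ commute with the magnetic translations $T_\gamma$ (by construction, since they are built from the pulled-back connection which is equivariant up to the cocycle $\sigma$), so the raising map intertwines the $A_\theta$-actions on domain and codomain; passing to $L^2$-closures preserves this equivariance. The main obstacle, which is the new content beyond Prieto's compact argument, is establishing that the eigenvalues $\mu_q$ are genuinely \emph{isolated} points of spectrum rather than limits of continuous spectrum; on the noncompact $Z$ this requires that the lowering argument above admit no pathological sequences escaping to infinity. I expect this to be controlled by the fact that the spectral projection lies a priori in $W^*(\Gamma,\sigma)\otimes B(L^2(\cF))$ as described in Section \ref{subsec:magnetic-schrodinger}, together with the explicit identification of the eigenspaces with closed, translation-invariant subspaces generated by the Gaussian-type holomorphic sections, which forces the spectral gap predicted by the Weitzenböck computation.
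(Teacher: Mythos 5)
Your construction of the eigensections via the iterated raising operators $\partial^{\nabla^{\theta,-\ell}}$, the telescoping computation of the eigenvalue $\mu_q$, and the equivariance argument for the $A_\theta$-module isomorphism all match the paper's argument in Proposition~\ref{prop:eigensections} and the final step of Proposition~\ref{prop:spectral-gaps}. The genuine gap is in your converse direction (the lowering argument). You begin with ``Given any eigensection $\phi$ of eigenvalue $\lambda < \mu_m$'' and iterate $\bar\partial^{\nabla^{\theta,\,\cdot}}$ to force $\lambda \in \{\mu_0,\dots,\mu_{m-1}\}$. This ladder argument, even if carried out carefully, only constrains the \emph{point} spectrum: it says nothing about the possible presence of continuous spectrum in $[0,\mu_m)$, because continuous spectrum need not carry $L^2$-eigensections to which one can apply $\bar\partial$. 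You acknowledge this at the end, but the sentence beginning ``I expect this to be controlled by the fact that the spectral projection lies a priori in $W^*(\Gamma,\sigma)\otimes B(L^2(\cF))$'' is not an argument --- membership in the group von Neumann algebra is automatic and in no way excludes continuous spectrum, which is generically present on the noncompact surface $Z$.

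The paper circumvents this by avoiding eigensections entirely. It decomposes
\begin{equation*}
\Omega^0(\tilde{\mathcal{E}}) \cong \ker \bar{\partial}^{\nabla^\theta} \oplus \overline{\im \partial^{\nabla^{\theta,-1}}},
\end{equation*}
observes that the Laplacian equals $\mu_0$ on the first summand, and then proves a \emph{Rayleigh-quotient lower bound} valid for \emph{every} $s$ in the second summand (not just eigensections): first for $s = \partial^{\nabla^{\theta,-1}} t$ by a direct inner-product manipulation using Lemma~\ref{lem:failure-to-commute} and the non-negativity of $\Delta_{\bar\partial}^{\theta,-1}$, then extended to the closure by continuity of $(\nabla^\theta)^*\nabla^\theta: H^2 \to L^2$ along sequences with controlled $\|\Delta s_n\|$. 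The bound
\begin{equation*}
\frac{\langle (\nabla^\theta)^*\nabla^\theta s, s\rangle}{\langle s, s\rangle} \ \geq\ 2\, i*F_{\nabla^{\theta,-1}} + i*F_{\nabla^\theta}
\end{equation*}
shows that the entire spectrum of the restriction to $\overline{\im\partial^{\nabla^{\theta,-1}}}$ lies strictly above $\mu_0$, which is exactly a spectral gap. Iterating this decomposition in $q$ gives the gap at every $\mu_q$ for $q \leq m-1$. This variational argument is what handles the noncompactness of $Z$; the ladder argument you propose does not. To repair your proof you would either need to adopt this variational decomposition, or supply a separate argument (e.g.\ via the resolvent or the Weyl criterion) excluding continuous spectrum below $\mu_m$, which is precisely what the Rayleigh-quotient bound does.
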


\begin{remark}
These are exactly the same as the eigenvalues in the discrete spectrum of the Laplacian on $\mathbb{H}^2$ computed by Comtet and Houston \cite{ComtetHouston85} (in \eqref{eqn:explicit-eigenvalues} $Z$ has Gauss curvature $-\frac{1}{a^2} = \chi(X)$). This is a nontrivial observation, since the cover $\mathbb{H}^2 \rightarrow Z$ has structure group the infinitely generated free group $\pi_1(Z) = [\pi_1(X), \pi_1(X)]$, and so the $L^2$ spectrum on $Z$ is not necessarily contained in the $L^2$ spectrum on $\mathbb{H}^2$. 
\end{remark}

The proof of this theorem is contained in Propositions \ref{prop:eigensections} and \ref{prop:spectral-gaps} below. Before proving these propositions we first need some basic results about the Laplacian on $Z$. In the following, the closure will always be taken in the Sobolev space $H^2(Z, \tilde{\mathcal{E}})$, which we use for the domain of the Laplacian.

\begin{lemma}\label{lem:decompose-sections}
For each $q$, the space of sections is a direct sum
\begin{equation}
\Omega^0(\tilde{\mathcal{E}} \otimes K^q) \cong \ker \bar{\partial}^{\nabla^{\theta,q}} \oplus \overline{\im \partial^{\nabla^{\theta, q-1}}} .
\end{equation}
\end{lemma}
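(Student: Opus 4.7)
The plan is to interpret this decomposition in $L^2(Z, \tilde{\mathcal{E}} \otimes K^q)$ and realize it as the standard orthogonal splitting
\[
H_1 \;=\; \ker T \,\oplus\, \overline{\im T^*}
\]
associated with any closed densely defined operator $T:H_1 \to H_2$ between Hilbert spaces. Here I would take $H_1 = L^2(Z, \tilde{\mathcal{E}} \otimes K^q)$, $H_2 = L^2(Z, \tilde{\mathcal{E}} \otimes K^{q-1})$, and $T = \bar{\partial}^{\nabla^{\theta,q}}$. The decomposition then reads exactly
\[
L^2(\tilde{\mathcal{E}} \otimes K^q) \;=\; \ker \bar{\partial}^{\nabla^{\theta,q}} \,\oplus\, \overline{\im \partial^{\nabla^{\theta,q-1}}},
\]
once $T^*$ is identified with (a constant multiple of) $\partial^{\nabla^{\theta,q-1}}$. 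The statement of the lemma at the level of $\Omega^0$ is then understood via the Sobolev/$L^2$ domain already introduced above Theorem \ref{thm:discrete-spectrum}.

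First, I would verify that $\bar{\partial}^{\nabla^{\theta,q}}$, initially defined on smooth compactly supported sections, extends to a closed densely defined operator on $L^2$. The key input is that $Z$ carries a complete Riemannian metric, being the pullback of the hyperbolic metric on the compact orbifold $X$ under a Riemannian covering. Completeness together with ellipticity of $\bar{\partial}$ guarantees that the minimal and maximal closures coincide and produce a well-defined closed operator; equivalently, smooth compactly supported sections are a core.

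Next, I would compute the formal adjoint by integration by parts. Pair $\bar{\partial}^{\nabla^{\theta,q}}$ acting on $\Omega^0_c(\tilde{\mathcal{E}}\otimes K^q)$ with sections in $\Omega^{0,1}_c(\tilde{\mathcal{E}}\otimes K^q)$, use the Hermitian metric to identify $\Omega^{0,1}(\tilde{\mathcal{E}}\otimes K^q) \cong \Omega^0(\tilde{\mathcal{E}}\otimes K^{q-1})$ as done above, and apply Stokes' theorem (no boundary contribution since the sections are compactly supported). The resulting formal adjoint agrees, up to a universal constant and a sign dictated by the conventions in subsection \ref{subsec:magnetic-schrodinger}, with $\partial^{\nabla^{\theta,q-1}}$. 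Combined with the closability from the previous step, this formal adjoint coincides with the functional-analytic adjoint $T^*$.

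With these two ingredients in hand, the abstract Hilbert space identity yields the lemma. The main obstacle is not conceptual but technical, namely the noncompactness of $Z$: I have to make sure that the formal adjoint computation produces the genuine adjoint (so that $\overline{\im T^*}$ really equals the closure of $\im \partial^{\nabla^{\theta,q-1}}$, and not some larger space coming from a non-minimal closure), and that $\ker \bar{\partial}^{\nabla^{\theta,q}}$ is closed in $L^2$. Both are handled by the completeness of the pullback hyperbolic metric, which makes the standard elliptic analysis go through as on a compact manifold.
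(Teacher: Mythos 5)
The paper states this lemma without proof, relying implicitly on the standard Hodge-theoretic decomposition (cf.\ the analogous statement for compact surfaces in Prieto's \cite{PrietoDGA}), so there is no written argument to compare against; your proposal supplies what the paper leaves tacit. Your route is the right one: the abstract Hilbert-space splitting $H_1 = \ker T \oplus \overline{\operatorname{im} T^*}$ for the closed, densely defined operator $T = \bar{\partial}^{\nabla^{\theta,q}}$ on $L^2$, together with the identification $T^* = \text{const}\cdot \partial^{\nabla^{\theta,q-1}}$ under the metric isomorphism $\Omega^{0,1}(\tilde{\mathcal{E}}\otimes K^q)\cong \Omega^0(\tilde{\mathcal{E}}\otimes K^{q-1})$, and you correctly flag the genuinely nontrivial analytic input on the noncompact $Z$, namely that completeness of the pulled-back hyperbolic metric (via Gaffney--Chernoff type results) forces the formal and functional-analytic adjoints to coincide, so that $\overline{\operatorname{im} T^*}$ really is the closure of $\operatorname{im}\partial^{\nabla^{\theta,q-1}}$. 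One small presentational point: the paper declares that closures in this section are taken in $H^2(Z,\tilde{\mathcal{E}})$ (the domain of the Laplacian), whereas your argument naturally produces the $L^2$-orthogonal splitting; it is worth adding a sentence noting that elliptic regularity and the boundedness of $(\nabla^\theta)^*\nabla^\theta : H^2 \to L^2$ transport the $L^2$ splitting to the $H^2$ domain used downstream.
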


The next result describes how the curvature measures the failure of the operators $\partial^{\nabla^{\theta,-(q+1)}}$ and $\bar{\partial}^{\nabla^{\theta, q + 1}}$ to commute with the associated Laplacians. The proof is a local calculation as carried out in \cite[Prop. 10]{PrietoDGA}, and so it also applies to the maximal abelian cover $Z$.
\begin{lemma}\label{lem:failure-to-commute}
In the following diagram
\begin{equation*}
\begin{tikzcd}
\Omega^0(K^{q+1} \otimes \tilde{\mathcal{E}}) \arrow{r}{\bar{\partial}^{\nabla^{\theta, q+1}}} \arrow{d}{\Delta_{\partial}^{\theta, q+1}} & \Omega^0(K^q \otimes \tilde{\mathcal{E}}) \arrow{d}{\Delta_{\partial}^{\theta, q}} \\
\Omega^0(K^{q+1} \otimes \tilde{\mathcal{E}}) \arrow{r}{\bar{\partial}^{\nabla^{\theta, q+1}}} & \Omega^0(K^q \otimes \tilde{\mathcal{E}})
\end{tikzcd}
\end{equation*}
we have 
\begin{equation}\label{eqn:pos-degree-commutativity}
\Delta_\partial^{\theta, q} \circ \bar{\partial}^{\nabla^{\theta, q+1}} - \bar{\partial}^{\nabla^{\theta, q+1}} \circ \Delta_\partial^{\theta, q+1} = - i \bar{\partial}^{\nabla^{\theta, q+1}} * F_{\nabla^{\theta, q+1}} .
\end{equation}
Equivalently, the above diagram commutes up to a factor of $- i \bar{\partial}^{\nabla^{\theta, q+1}} * F_{\nabla^{\theta, q+1}}$. Similarly, in the following diagram
\begin{equation*}
\begin{tikzcd}
\Omega^0(K^{-(q+1)} \otimes \tilde{\mathcal{E}}) \arrow{r}{\partial^{\nabla^{\theta, -(q+1)}}} \arrow{d}{\Delta_{\bar{\partial}}^{\theta,-(q+1)}} & \Omega^0(K^{-q} \otimes \tilde{\mathcal{E}}) \arrow{d}{\Delta_{\bar{\partial}}^{\theta,-q}} \\
\Omega^0(K^{-(q+1)} \otimes \tilde{\mathcal{E}}) \arrow{r}{\partial^{\nabla^{\theta,-(q+1)}}} & \Omega^0(K^{-q} \otimes \tilde{\mathcal{E}})
\end{tikzcd}
\end{equation*}
we have 
\begin{equation}\label{eqn:neg-degree-commutativity}
\Delta_{\bar{\partial}}^{\theta,-q} \circ \partial^{\nabla^{\theta,-(q+1)}} - \partial^{\nabla^{\theta,-(q+1)}} \circ \Delta_{\bar{\partial}}^{\theta,-(q+1)} =  i \partial^{\nabla^{\theta,-(q+1)}} *F_{\nabla^{\theta,-(q+1)}} .
\end{equation}
Equivalently, the above diagram commutes up to a factor of $+i \partial^{\nabla^{\theta,-(q+1)}} *F_{\nabla^{\theta,-(q+1)}}$.
\end{lemma}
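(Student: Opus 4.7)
The plan is to reduce the proof to a purely local calculation, carried out in a coordinate chart on $X$. Both sides of \eqref{eqn:pos-degree-commutativity} are differential operators whose coefficients at a point depend only on the K\"ahler metric, the local connection 1-form, and its curvature. Since $p : Z \to X$ is a local biholomorphic isometry and $\nabla^{\theta,q+1}$ on $Z$ is the pullback of the constant-curvature connection on $X$, any pointwise operator identity on $X$ lifts verbatim to $Z$. Thus the claim reduces to the compact Riemann surface setting, where it is exactly \cite[Prop.~10]{PrietoDGA}.

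To carry out the local computation, I would fix holomorphic coordinates, trivialize $\tilde{\mathcal{E}} \otimes K^{q+1}$, and write $\bar\partial^{\nabla^{\theta,q+1}} = \bar\partial + A^{0,1}$, $\partial^{\nabla^{\theta,q+1}} = \partial + A^{1,0}$. Using $\Delta_\partial^{\theta, q+1} = (\partial^{\nabla^{\theta,q+1}})^{*} \partial^{\nabla^{\theta,q+1}}$ on sections together with the Leibniz rule, the commutator $\Delta_\partial^{\theta,q} \circ \bar\partial^{\nabla^{\theta,q+1}} - \bar\partial^{\nabla^{\theta,q+1}} \circ \Delta_\partial^{\theta,q+1}$ expands into a sum in which each non-curvature contribution cancels against its counterpart; the residual piece is precisely $\bar\partial^{\nabla^{\theta,q+1}}$ composed with the multiplication operator $-i*F_{\nabla^{\theta,q+1}}$, yielding the stated right-hand side. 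A cleaner derivation uses the substitution $\Delta_\partial = \Delta_{\bar\partial} + i*F$ from \eqref{eqn:decompose-laplacian} together with the K\"ahler identity $[\partial^\nabla, \bar\partial^\nabla] = F_\nabla^{1,1}$, so that the entire calculation reduces to commutators with the scalar curvature function $*F$, which on our constant-curvature model are straightforward to evaluate.

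The second identity \eqref{eqn:neg-degree-commutativity} follows from the same argument with $\partial$ and $\bar\partial$ interchanged and with $K^{q+1}$ replaced by $K^{-(q+1)}$; the opposite sign on the right-hand side reflects the antisymmetry $[\partial^\nabla, \bar\partial^\nabla] = -[\bar\partial^\nabla, \partial^\nabla]$ in the K\"ahler bracket. The main technical obstacle is bookkeeping for the natural identifications $\Omega^{0,1}(K^q \otimes \tilde{\mathcal{E}}) \cong \Omega^0(K^{q-1} \otimes \tilde{\mathcal{E}})$ and its $(1,0)$ analogue, which are induced by the Hermitian metric on $K$ and enter whenever one compares Laplacians across different tensor powers of $K$. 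Once the sign conventions for these identifications are fixed consistently, both identities follow in parallel from the same local Bochner-Weitzenb\"ock computation.
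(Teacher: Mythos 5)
Your proposal is correct and takes essentially the same route as the paper, which simply remarks that the identity is the local calculation of \cite[Prop.~10]{PrietoDGA} and hence carries over to the cover $Z$. One caveat about your second sentence of justification: you write that $\nabla^{\theta,q+1}$ on $Z$ is the pullback of a constant-curvature connection on $X$, but this fails precisely in the case of chief interest here, since for irrational $\theta$ there is no line bundle on $X$ carrying a unitary connection with $i*F=\theta$ and $\nabla^\theta$ is constructed directly on the trivial bundle over $Z$. The transfer to $Z$ is instead justified by your first (and correct) observation, namely that both sides of \eqref{eqn:pos-degree-commutativity} are differential operators whose local coefficients depend only on the metric, the local connection $1$-form, and its constant curvature, so that Prieto's pointwise computation applies unchanged to any constant-curvature unitary connection on $Z$ regardless of whether it descends to $X$.
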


\subsection{An expression for the eigensections}

Recall the definition of $m$ as the largest integer such that $K^{-m} \otimes \tilde{\mathcal{E}}$ has positive curvature, or equivalently such that $\theta - m(2g-2) > 0$.

The first proposition constructs eigensections of the Laplacian with given eigenvalues.

\begin{proposition}\label{prop:eigensections}
For all $0 \leq q \leq m$, the sections in 
\begin{equation}\label{eqn:level-q-eigensections}
\overline{\partial^{\nabla^{\theta,-1}} \circ \cdots \circ \partial^{\nabla^{\theta,-q}} \left( \ker \bar{\partial}^{\nabla^{\theta,-q}} \right)} \subset \Omega^0(\tilde{\mathcal{E}})
\end{equation}
are eigensections of $(\nabla^\theta)^* \nabla^\theta$ with eigenvalue 
\begin{equation*}
\mu_q = (2q+1) \theta - q(q+1) (2g-2).
\end{equation*}
\end{proposition}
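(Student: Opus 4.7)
The plan is to prove the eigenvalue equation by induction that walks up the chain
$\ker \bar\partial^{\nabla^{\theta,-q}} \to \Omega^0(\tilde{\mathcal{E}} \otimes K^{-q+1}) \to \cdots \to \Omega^0(\tilde{\mathcal{E}})$, tracking the eigenvalue of the appropriate $\bar\partial$-Laplacian at each stage. The main tool is the commutator identity \eqref{eqn:neg-degree-commutativity} from Lemma \ref{lem:failure-to-commute}, together with the identity \eqref{eqn:Laplacian-curvature} which lets us trade $(\nabla^\theta)^*\nabla^\theta$ for $2\Delta_{\bar\partial}^\theta + \theta$. The key computational fact is that $i*F_{\nabla^{\theta,-j}} = \theta - j(2g-2)$ is a real constant on $Z$, since $\nabla^\theta$ has constant curvature $\theta$ and the induced connection on $K^{-1}$ inherits the constant Gauss curvature $\chi(X) = 2-2g$.

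Concretely, starting from a smooth $s_0 \in \ker \bar\partial^{\nabla^{\theta,-q}}$, set $s_{k+1} := \partial^{\nabla^{\theta,-(q-k)}}(s_k)$ for $k = 0, \ldots, q-1$; note that $s_k \in \Omega^0(\tilde{\mathcal{E}} \otimes K^{-(q-k)})$, so $s_q$ is exactly the image section appearing in \eqref{eqn:level-q-eigensections}. By induction on $k$, the claim is that $\Delta_{\bar\partial}^{\theta,-(q-k)} s_k = \lambda_k s_k$, with $\lambda_0 = 0$ and
$\lambda_{k+1} = \lambda_k + \theta - (q-k)(2g-2)$.
For the inductive step, apply \eqref{eqn:neg-degree-commutativity} (taking the formula's $q$ to be $q-k-1$) to $s_k$ and invoke the inductive hypothesis; the curvature term on the right-hand side contributes exactly the scalar $\theta - (q-k)(2g-2)$. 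Summing the recursion yields $\lambda_q = q\theta - (2g-2)\sum_{k=0}^{q-1}(q-k) = q\theta - q(q+1)(g-1)$, and then \eqref{eqn:Laplacian-curvature} converts this into the claimed eigenvalue $2\lambda_q + \theta = (2q+1)\theta - q(q+1)(2g-2) = \mu_q$ of $(\nabla^\theta)^*\nabla^\theta$ on $s_q$.

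To handle the closure in \eqref{eqn:level-q-eigensections}, I would use that $(\nabla^\theta)^*\nabla^\theta$ is self-adjoint on $H^2(Z,\tilde{\mathcal{E}})$ and hence closed as an unbounded operator on $L^2(Z,\tilde{\mathcal{E}})$; its $\mu_q$-eigenspace is therefore $L^2$-closed, so any $L^2$-limit of sections of the form $s_q$ remains an eigensection with the same eigenvalue. Elliptic regularity applied to $\bar\partial^{\nabla^{\theta,-q}} s_0 = 0$ ensures $s_0$ is smooth, so the algebraic manipulations above are legitimate pointwise.

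The main obstacle I expect is precisely this passage from smooth sections on the noncompact cover $Z$ to an $L^2$ statement: unlike the compact case treated by Prieto, one cannot assume that all intermediate sections $s_k$ automatically lie in $L^2$, nor that the $\partial$-operators are bounded between the relevant Sobolev spaces. The cleanest fix is to observe that \eqref{eqn:neg-degree-commutativity} is a purely local identity and that the eigenvalue equation for $(\nabla^\theta)^*\nabla^\theta$ can therefore be verified on smooth sections in the image, after which the closedness of the self-adjoint operator automatically pushes the conclusion through the $L^2$-closure.
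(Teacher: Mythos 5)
Your proof is correct and follows essentially the same approach as the paper: both rest on iterated application of the commutator identity \eqref{eqn:neg-degree-commutativity}, with the paper commuting $\Delta_{\bar\partial}^\theta$ down through the chain in one shot (equation \eqref{eqn:all-commuting-terms}) while you walk up the chain tracking the eigenvalue $\lambda_k$ of the intermediate $\Delta_{\bar\partial}^{\theta,-(q-k)}$ — these are the same computation. The only cosmetic difference is in handling the closure: the paper invokes continuity of $(\nabla^\theta)^*\nabla^\theta: H^2 \to L^2$ (the closure being taken in $H^2$ as fixed earlier), whereas you appeal to closedness of the self-adjoint operator and closedness of its eigenspaces in $L^2$; both yield the desired conclusion.
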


\begin{proof}
Let $s_{-q} \in \ker \bar{\partial}^{\nabla^{\theta,-q}} \subset \Omega^0(\tilde{\mathcal{E}} \otimes K^{-q})$, and let $s = \partial^{\nabla^{\theta,-1}} \circ \cdots \circ \partial^{\nabla^{\theta,-q}} s_{-q}$. Then Lemma \ref{lem:failure-to-commute} implies that
\begin{align}\label{eqn:first-commuting-term}
\begin{split}
\Delta_{\bar{\partial}}^\theta s = \Delta_{\bar{\partial}}^\theta (\partial^{\nabla^{\theta,-1}} \circ \cdots \circ \partial^{\nabla^{\theta,-q}} s^{-q} ) & = \partial^{\nabla^{\theta,-1}} \circ \Delta_{\bar{\partial}}^{\theta,-1} \circ \partial^{\nabla^{\theta,-2}} \circ \cdots \circ \partial^{\nabla^{\theta,-q}} s^{-q} \\
 & \quad \quad + \partial^{\nabla^{\theta,-1}} (i * F_{\nabla^{\theta,-1}}) \partial^{\nabla^{\theta,-2}} \circ \cdots \circ \partial^{\nabla^{\theta,-q}} s^{-q} \\
 & = \partial^{\nabla^{\theta,-1}} \circ \Delta_{\bar{\partial}}^{\theta,-1} \circ \partial^{\nabla^{\theta,-2}} \circ \cdots \circ \partial^{\nabla^{\theta,-q}} s^{-q} \\
 & \quad \quad + (\theta - (2g-2)) s ,
\end{split}
\end{align}
since the curvature $i * F_{\nabla^{\theta,-1}}$ of the connection on $\tilde{\mathcal{E}} \otimes K^{-1}$ is constant and equal to $\theta - (2g-2)$. Repeatedly applying Lemma \ref{lem:failure-to-commute} shows that
\begin{align}\label{eqn:all-commuting-terms}
\begin{split}
\Delta_{\bar{\partial}}^\theta (\partial^{\nabla^{\theta,-1}} \circ \cdots \circ \partial^{\nabla^{\theta,-q}} s_{-q} ) & = \partial^{\nabla^{\theta,-1}} \circ \cdots \circ \partial^{\nabla^{\theta,-q}} \circ \Delta_{\bar{\partial}}^{\theta,-q} s_{-q} + \left( \sum_{\ell=1}^q (\theta - \ell (2g-2)) \right) s \\
 & = \left( q \theta - \frac{q(q+1) (2g-2)}{2}  \right) s ,
\end{split}
\end{align}
since $s_{-q}$ is holomorphic by assumption. Therefore \eqref{eqn:Laplacian-curvature} shows that
\begin{equation*}
(\nabla^\theta)^* \nabla^\theta s = 2 \Delta_{\bar{\partial}}^\theta s + i * F_{\nabla^\theta} s = \left( (2q+1)\theta - q(q+1) (2g-2) \right) s .
\end{equation*}
The statement of \eqref{eqn:level-q-eigensections} on the closure of this space of sections in $H^2(Z, \mathcal{E})$ follows by taking sequences of such sections and noting that the Laplacian is continuous as an operator $H^2(Z, \mathcal{E}) \rightarrow L^2(Z, \mathcal{E})$.
\end{proof}

\subsection{Spectral gaps around the low-lying eigenvalues}

In this section we show that the eigensections of Proposition \ref{prop:eigensections} are the only eigensections with eigenvalue bounded above by $\mu_m = (2m+1)\theta - m(m+1) (2g-2)$. As a consequence, we see that in the interval $[0, \mu_m)$, the spectrum is discrete and consists of eigenvalues $\mu_0, \ldots, \mu_{m-1}$. 

\begin{proposition}\label{prop:spectral-gaps}
In the interval $[0,\mu_m)$, the spectrum of the Laplacian $(\nabla^\theta)^* \nabla^\theta$ takes the discrete values $\mu_q = (2q+1) \theta - q(q+1) (2g-2)$ for each $q = 0, \ldots, m-1$, and the space of eigensections is
\begin{equation*}
E_{\mu_q} = \overline{\partial^{\nabla^{\theta,-1}} \circ \cdots \circ \partial^{\nabla^{\theta,-q}} \left( \ker \bar{\partial}^{\nabla^{\theta,-q}} \right)} \subset \Omega^0(\tilde{\mathcal{E}}) .
\end{equation*}
This defines an isomorphism of $A_\theta$-modules $E_{\mu_q} \cong H^0(\tilde{\mathcal{E}} \otimes K^{-q})$.
\end{proposition}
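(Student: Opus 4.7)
The plan is to establish the converse to Proposition \ref{prop:eigensections}: given an eigensection $s$ of $(\nabla^\theta)^*\nabla^\theta$ with eigenvalue $\mu \in [0,\mu_m)$, descend along the ladder of bundles $\tilde{\mathcal{E}} \otimes K^{-q}$ by repeatedly splitting off a holomorphic component and inverting $\partial^{\nabla^{\theta,-(q+1)}}$. The descent will terminate at some level $q < m$ with a holomorphic section, forcing $\mu = \mu_q$ and placing $s$ in the image of \eqref{eqn:explicit-eigensections}.

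Concretely, at each level $q \leq m-1$, Lemma \ref{lem:decompose-sections} provides the orthogonal decomposition
\begin{equation*}
\Omega^0(\tilde{\mathcal{E}} \otimes K^{-q}) \cong \ker \bar{\partial}^{\nabla^{\theta,-q}} \oplus \overline{\im \partial^{\nabla^{\theta,-(q+1)}}},
\end{equation*}
which is preserved by the self-adjoint Laplacian $(\nabla^{\theta,-q})^*\nabla^{\theta,-q}$. Sections in the holomorphic summand carry the minimal eigenvalue $\theta - q(2g-2)$ by \eqref{eqn:Laplacian-curvature}. Remark \ref{rem:injective-10} shows that $\partial^{\nabla^{\theta,-(q+1)}}$ is bounded below with constant $\sqrt{\theta - (q+1)(2g-2)} > 0$ for $q+1 \leq m$, hence is injective with closed image, so every element in the second summand has a unique lift $t_{q+1}$. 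A short calculation using \eqref{eqn:neg-degree-commutativity} then shows that if $t_q$ has full-Laplacian eigenvalue $E_q$, the lift has eigenvalue $E_{q+1} = E_q - 2\theta + (2q+1)(2g-2)$, so iteration from $E_0 = \mu$ yields $E_q = \mu - 2q\theta + q^2(2g-2)$. Imposing the nonnegativity bound $E_q \geq \theta - q(2g-2)$, with equality if and only if $t_q \in \ker \bar{\partial}^{\nabla^{\theta,-q}}$, gives $\mu \geq \mu_q$; hence for $\mu < \mu_m$ the descent must terminate at some $q < m$ with $t_q$ holomorphic, producing $\mu = \mu_q$ and placing $s$ in \eqref{eqn:explicit-eigensections}.

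For the $A_\theta$-module isomorphism, observe that the magnetic translations $T_\gamma$ of Section 2.2 commute with $\nabla^\theta$ and, by the same argument as for $\nabla_A$, with the induced connection on each $\tilde{\mathcal{E}} \otimes K^{-q}$, hence with every $\partial^{\nabla^{\theta,-j}}$ and $\bar{\partial}^{\nabla^{\theta,-j}}$. Consequently $\ker \bar{\partial}^{\nabla^{\theta,-q}} = H^0(\tilde{\mathcal{E}} \otimes K^{-q})$ is a $\sigma$-twisted $\Gamma$-module, equivalently an $A_\theta$-module; the composition $\partial^{\nabla^{\theta,-1}} \circ \cdots \circ \partial^{\nabla^{\theta,-q}}$ is $A_\theta$-linear, surjective onto $E_{\mu_q}$ by the descent above, and injective by Remark \ref{rem:injective-10}, yielding the claimed isomorphism.

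The main obstacle lies in the functional-analytic setup on the noncompact cover $Z$: one must justify Lemma \ref{lem:decompose-sections} at the Hilbert-space level, with closed range for $\partial^{\nabla^{\theta,-(q+1)}}$ and genuine $L^2$-orthogonality, and verify that the descent behaves correctly with respect to the $H^2$-closure in \eqref{eqn:explicit-eigensections}. The uniform spectral gap $\theta - (q+1)(2g-2) > 0$ from Remark \ref{rem:injective-10} is precisely the bounded-below estimate needed to apply the closed-range theorem, so Prieto's compact argument \cite{PrietoDGA} extends to $Z$ with only these analytical adjustments.
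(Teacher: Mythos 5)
The core idea of inverting $\partial^{\nabla^{\theta,-(q+1)}}$ via closed range and computing exact eigenvalues of the descended preimage is a genuinely different and cleaner mechanism than the paper's: the paper instead establishes Rayleigh-quotient lower bounds on the summand $\overline{\im\,\partial^{\nabla^{\theta,-1}}\circ\cdots\circ\partial^{\nabla^{\theta,-q}}}$ by means of an iterative expansion of the inner product $\langle \partial^{\nabla^{\theta,-1}}\circ\cdots\circ\Delta_{\bar\partial}^{\theta,-(q+1)}u_n,\,\partial^{\nabla^{\theta,-1}}\circ\cdots\circ\partial^{\nabla^{\theta,-(q+1)}}u_n\rangle$ and then handles closures with approximating sequences. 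Your recursion $E_{q+1}=E_q-2\theta+(2q+1)(2g-2)$ and the closed-form $E_q=\mu-2q\theta+q^2(2g-2)$ are correct, the nonnegativity bound $E_q\geq\theta-q(2g-2)$ gives exactly $\mu\geq\mu_q$, and the termination argument (reaching level $m$ forces $\mu\geq\mu_m$, a contradiction) works.

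However, there is a genuine gap: the Proposition asserts that the \emph{spectrum} of the Laplacian in $[0,\mu_m)$ is discrete and equal to $\{\mu_0,\ldots,\mu_{m-1}\}$, whereas your descent argument is explicitly phrased for \emph{eigensections} (``given an eigensection $s$ with eigenvalue $\mu\in[0,\mu_m)$''). On the noncompact cover $Z$ the operator is expected to have essential spectrum, and ruling out essential spectrum below $\mu_m$ is precisely the content of the discreteness claim; an eigenvalue-only argument does not address it. The paper's inner-product estimate $\langle(\nabla^\theta)^*\nabla^\theta s,s\rangle/\|s\|^2\geq 2i*F_{\nabla^{\theta,-1}}+i*F_{\nabla^\theta}$ holds for \emph{every} $s$ in the invariant summand, and by the variational characterisation of the bottom of the spectrum this rules out any spectrum (point or continuous) below the stated value. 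You can repair the gap without abandoning your approach: the bounded-below estimate $\|\partial^{\nabla^{\theta,-(q+1)}}u\|^2\geq(\theta-(q+1)(2g-2))\|u\|^2$ you already invoke is exactly this quadratic-form bound, and, combined with the observation that the nonzero spectrum of $\Delta_{\bar\partial}^{\theta,-q}=\partial^{\nabla^{\theta,-(q+1)}}(\partial^{\nabla^{\theta,-(q+1)}})^*$ (up to the Kähler normalisation) coincides with that of $\Delta_\partial^{\theta,-(q+1)}=(\partial^{\nabla^{\theta,-(q+1)}})^*\partial^{\nabla^{\theta,-(q+1)}}$, the similarity transports the full spectral lower bound down the ladder, not merely the eigenvalue recursion. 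Make that step explicit; as written, the proof establishes the eigenvalue classification but not the spectral gaps.
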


\begin{proof}
From \eqref{eqn:Laplacian-curvature} it follows directly that $s \in \ker \bar{\partial}^{\nabla^\theta}$ implies that $(\nabla^\theta)^* \nabla^\theta s = \theta s = \mu_0 s$. 

We have $\ker \bar{\partial}^{\nabla^\theta} = \ker (\partial^{\nabla^{\theta,-1}})^* = \coker \partial^{\nabla^{\theta,-1}}$, and so there is a direct sum decomposition
\begin{equation*}
\Omega^0(\tilde{\mathcal{E}}) \cong \ker \bar{\partial}^{\nabla^\theta} \oplus \overline{\im \partial^{\nabla^{\theta,-1}}} .
\end{equation*}

First consider the case where $s \in \im \partial^{\nabla^{\theta,-1}}$ (we will generalise this to $s \in \overline{\im \partial^{\nabla^{\theta,-1}}}$ below). Write $s = \partial^{\nabla^{\theta,-1}} t$. We can make this choice unique by choosing $t \in (\ker \partial^{\nabla^{\theta,-1}})^\perp$, although this is not necessary in the following proof. Equation \eqref{eqn:neg-degree-commutativity} shows that
\begin{equation*}
\Delta_{\bar{\partial}}^\theta s = \Delta_{\bar{\partial}}^\theta \circ \partial^{\nabla^{\theta,-1}} t = i \partial^{\nabla^{\theta,-1}} * F_{\nabla^{\theta,-1}} t + \partial^{\nabla^{\theta,-1}} \circ \Delta_{\bar{\partial}}^{\theta,-1} t \in \im \partial^{\nabla^{\theta,-1}} .
\end{equation*}
Since the curvature $i * F_{\nabla^{\theta,-1}}$ is constant, then $i \partial^{\nabla^{\theta,-1}} * F_{\nabla^{\theta,-1}} t = (i*F_{\nabla^{\theta,-1}}) \partial^{\nabla^{\theta,-1}} t = (i*F_{\nabla^{\theta,-1}}) s$, and so
\begin{equation}\label{eqn:first-image}
(\nabla^\theta)^* \nabla^\theta s = 2 \Delta_{\bar{\partial}}^\theta s + i *F_{\nabla^\theta}s = (2i * F_{\nabla^{\theta,-1}} + i * F_{\nabla^\theta}) s + 2 \partial^{\nabla^{\theta,-1}} \circ \Delta_{\bar{\partial}}^{\theta,-1} t .
\end{equation}
If $t \in \ker \bar{\partial}^{\nabla^{\theta,-1}}$, then the remaining term vanishes and we obtain 
\begin{equation}
(\nabla^\theta)^* \nabla^\theta s = 2 \Delta_{\bar{\partial}}^\theta s + i *F_{\nabla^\theta}s = (2i * F_{\nabla^{\theta,-1}} + i * F_{\nabla^\theta}) s ,
\end{equation}
which gives us the bound that we want, since $i*F_{\nabla^{\theta,-1}} = \theta - (2g-2)$ is positive. 

If $t \in (\ker \bar{\partial}^{\nabla^{\theta,-1}})^\perp$, then we want to show that the final term in \eqref{eqn:first-image} is non-negative. Take the $L^2$ inner product of both sides of \eqref{eqn:first-image} with $s = \partial^{\nabla^{\theta,-1}} t$ to obtain
\begin{equation}\label{eqn:laplacian-inner-product}
\left< (\nabla^\theta)^* \nabla^\theta s , s \right> = \left< (2i * F_{\nabla^{\theta,-1}} + i * F_{\nabla^\theta}) s, s \right> + 2 \left< \partial^{\nabla^{\theta,-1}} \circ \Delta_{\bar{\partial}}^{\theta,-1} t , \partial^{\nabla^{\theta,-1}} t \right> .
\end{equation}
The final term is 
\begin{align*}
\left< \partial^{\nabla^{\theta,-1}} \circ \Delta_{\bar{\partial}}^{\theta,-1} t , \partial^{\nabla^{\theta,-1}} t \right> & = \left< \Delta_\partial^{\theta,-1} \circ \Delta_{\bar{\partial}}^{\theta,-1} t, t \right> \\
 & = \left< \left( \Delta_{\bar{\partial}}^{\theta,-1} + i * F_{\nabla^{\theta,-1}} \right) \circ \Delta_{\bar{\partial}}^{\theta,-1} t, t \right> \quad \text{from \eqref{eqn:decompose-laplacian}} \\
 & = \left< \Delta_{\bar{\partial}}^{\theta,-1} t, \Delta_{\bar{\partial}}^{\theta,-1} t \right> + i * F_{\nabla^{\theta,-1}} \left< \Delta_{\bar{\partial}}^{\theta,-1} t, t \right> .
\end{align*}
Since $i * F_{\nabla^{\theta,-1}} > 0$ and $\Delta_{\bar{\partial}}^{\theta,-1}$ is non-negative, then this final term from \eqref{eqn:laplacian-inner-product} is non-negative. Therefore \eqref{eqn:laplacian-inner-product} becomes
\begin{align*}
\left< (\nabla^\theta)^* \nabla^\theta s , s \right> & \geq \left< (2i * F_{\nabla^{\theta,-1}} + i * F_{\nabla^\theta}) s, s \right> \\
 & = (2i * F_{\nabla^{\theta,-1}} + i * F_{\nabla^\theta} ) \left< s, s \right> \\
\Leftrightarrow \quad \frac{\left< (\nabla^\theta)^* \nabla^\theta s , s \right>}{\left< s, s \right>} & \geq 2i * F_{\nabla^{\theta,-1}} + i * F_{\nabla^\theta} ,
\end{align*}
and so there is a gap in the spectrum at $\theta = i * F_{\nabla^\theta}$, since $2i * F_{\nabla^{\theta,-1}} > 0$.

Now consider a general $s \in \overline{\im \partial^{\nabla^{\theta,-1}}}$, and write $s = \lim_{n \rightarrow \infty} s_n$ for a sequence $\{ s_n \} \subset \im \partial^{\nabla^{\theta,-1}}$ such that $\| (\nabla^\theta)^* \nabla^\theta s_n \|$ is bounded and $\left\| s_n \right\|^2 = \left\| s \right\|^2$. As before, write $s_n = \partial^{\nabla^{\theta,-1}} t_n$ for each $n$ and apply the same argument as above to show that 
\begin{equation*}
\frac{\left< (\nabla^\theta)^* \nabla^\theta s_n , s_n \right>}{\left< s, s \right>} = \frac{\left< (\nabla^\theta)^* \nabla^\theta s_n , s_n \right>}{\left< s_n, s_n \right>} \geq 2i * F_{\nabla^{\theta,-1}} + i * F_{\nabla^\theta} .
\end{equation*}
Therefore, since the operator $(\nabla^\theta)^* \nabla^\theta$ is continuous and $\| (\nabla^\theta)^* \nabla^\theta s_n \|$ is bounded, then
\begin{equation*}
\frac{\left< (\nabla^\theta)^* \nabla^\theta s , s \right>}{\left< s, s \right>} = \lim_{n \rightarrow \infty} \frac{\left< (\nabla^\theta)^* \nabla^\theta s_n , s_n \right>}{\left< s, s \right>} \geq 2i * F_{\nabla^{\theta,-1}} + i * F_{\nabla^\theta} .
\end{equation*}
In summary, if $s \in \overline{\im \partial^{\nabla^{\theta,-1}}}$, then $\frac{\left< (\nabla^\theta)^* \nabla^\theta s , s \right>}{\left< s, s \right>} \geq 2i * F_{\nabla^{\theta,-1}} + i * F_{\nabla^\theta}$, and so the eigenvalues of $(\nabla^\theta)^* \nabla^\theta$ on $\overline{\im \partial^{\nabla^{\theta,-1}}}$ are bounded below by $2i * F_{\nabla^{\theta,-1}} + i * F_{\nabla^\theta}$. Since $2i * F_{\nabla^{\theta,-1}} > 0$ then there is a spectral gap at the lowest eigenvalue $i * F_{\nabla^\theta}$.

Continuing inductively, suppose that $s \in \overline{\im \partial^{\nabla^{\theta,-1}} \circ \cdots \circ \partial^{\nabla^{\theta,-q}}}$. Again, we can decompose $\Omega^0(\tilde{\mathcal{E}} \otimes K^{-q}) \cong \overline{\im \partial^{\nabla^{\theta,-(q+1)}}} \oplus \overline{\coker \partial^{\nabla^{\theta,-(q+1)}}}$. Given any $t \in \Omega^0(\tilde{\mathcal{E}} \otimes K^{-q})$, there exists a sequence $\{t_n\} \subset \im \partial^{\nabla^{\theta,-(q+1)}} \oplus \coker \partial^{\nabla^{\theta,-(q+1)}}$ such that $\lim_{n \rightarrow \infty} t_n = t$. Write $t_n = \partial^{\nabla^{\theta,-(q+1)}} u_n + v_n$ with $u_n \in \Omega^0(\tilde{\mathcal{E}} \otimes K^{-(q+1)})$ and $v_n \in \coker \partial^{\nabla^{\theta,-(q+1)}} \cong \ker \bar{\partial}^{\nabla^{\theta,-q}}$. 

The same argument as before shows that
\begin{align*}
\Delta_{\bar{\partial}}^\theta (\partial^{\nabla^{\theta,-1}} \circ \cdots \circ \partial^{\nabla^{\theta,-q}} v_n) & = \partial^{\nabla^{\theta,-1}} \circ \cdots \circ \partial^{\nabla^{\theta,-q}} \circ \Delta_{\bar{\partial}}^{\theta,-q} v_n \\
 & \quad \quad + \left( \sum_{\ell=1}^q (\theta - \ell (2g-2)) \right) \partial^{\nabla^{\theta,-1}} \circ \cdots \circ \partial^{\nabla^{\theta,-q}} v_n \\
 & = \left( q \theta - \frac{q(q+1)(2g-2)}{2}  \right) \partial^{\nabla^{\theta,-1}} \circ \cdots \circ \partial^{\nabla^{\theta,-q}} v_n ,
\end{align*}
and therefore $\partial^{\nabla^{\theta,-1}} \circ \cdots \circ \partial^{\nabla^{\theta,-q}} v_n$ is an eigensection of $(\nabla^\theta)^* \nabla^\theta = 2 \Delta_{\bar{\partial}}^\theta + i * F_{\nabla^\theta}$ with eigenvalue $2\left( q \theta - \frac{q(q+1)(2g-2)}{2} \right) + \theta = (2q+1) \theta - q(q+1)(2g-2)$.

Now consider $s_n =  \partial^{\nabla^{\theta,-1}} \circ \cdots \circ \partial^{\nabla^{\theta,-(q+1)}} u_n$. Again, the same argument as above shows that
\begin{equation}\label{eqn:laplacian-stage-q}
\Delta_{\bar{\partial}}^\theta s_n = \partial^{\nabla^{\theta,-1}} \circ \cdots \circ \partial^{\nabla^{\theta,-(q+1)}} \Delta_{\bar{\partial}}^{\theta,-(q+1)} u_n + \sum_{\ell=1}^{q+1} \left( \theta - \ell (2g-2) \right) s_n .
\end{equation}
It remains to show that the inner product of the first term with $s_n = \partial^{\nabla^{\theta,-1}} \circ \cdots \circ \partial^{\nabla^{\theta,-(q+1)}} u_n$ is non-negative. To see this, write
\begin{multline*}
\left< \partial^{\nabla^{\theta,-1}} \circ \cdots \circ \partial^{\nabla^{\theta,-(q+1)}} \Delta_{\bar{\partial}}^{\theta,-(q+1)} u_n , \partial^{\nabla^{\theta,-1}} \circ \cdots \circ \partial^{\nabla^{\theta,-(q+1)}} u_n \right> \\
 = \left< (\partial^{\nabla^{\theta,-1}})^* \partial^{\nabla^{\theta,-1}} \circ \cdots \circ \partial^{\nabla^{\theta,-(q+1)}} \Delta_{\bar{\partial}}^{\theta,-(q+1)} u_n , \partial^{\nabla^{\theta,-2}} \circ \cdots \circ \partial^{\nabla^{\theta,-(q+1)}} u_n \right> \\
 = \left< \Delta_\partial^{\theta,-1} \partial^{\nabla^{\theta,-2}} \circ \cdots \circ \partial^{\nabla^{\theta,-(q+1)}} \Delta_{\bar{\partial}}^{\theta,-(q+1)} u_n , \partial^{\nabla^{\theta,-2}} \circ \cdots \circ \partial^{\nabla^{\theta,-(q+1)}} u_n \right> \\
 = \left< \left( i *F_{\nabla^{\theta,-1}} + \Delta_{\bar{\partial}}^{\theta,-1} \right) \partial^{\nabla^{\theta,-2}} \circ \cdots \circ \partial^{\nabla^{\theta,-(q+1)}} \Delta_{\bar{\partial}}^{\theta,-(q+1)} u_n , \partial^{\nabla^{\theta,-2}} \circ \cdots \circ \partial^{\nabla^{\theta,-(q+1)}} u_n \right> \\
 = (i*F_{\nabla^{\theta,-1}}) \left< \partial^{\nabla^{\theta,-2}} \circ \cdots \circ \partial^{\nabla^{\theta,-(q+1)}} \Delta_{\bar{\partial}}^{\theta,-(q+1)} u_n , \partial^{\nabla^{\theta,-2}} \circ \cdots \circ \partial^{\nabla^{\theta,-(q+1)}} u_n \right> \\
  + \left< \partial^{\nabla^{\theta,-2}} \circ \cdots \circ \partial^{\nabla^{\theta,-(q+1)}} \Delta_{\bar{\partial}}^{\theta,-(q+1)} u_n , \Delta_{\bar{\partial}}^{\theta,-1} \partial^{\nabla^{\theta,-2}} \circ \cdots \circ \partial^{\nabla^{\theta,-(q+1)}} u_n \right> .
\end{multline*}
In the final expression above, the first term is a positive multiple of 
\begin{equation*}
\left< \partial^{\nabla^{\theta,-2}} \circ \cdots \circ \partial^{\nabla^{\theta,-(q+1)}} \Delta_{\bar{\partial}}^{\theta,-(q+1)} u_n , \partial^{\nabla^{\theta,-2}} \circ \cdots \circ \partial^{\nabla^{\theta,-(q+1)}} u_n \right> .
\end{equation*}
In order to evaluate the second term, we can use Lemma \ref{lem:failure-to-commute} and the same process as in the proof of Proposition \ref{prop:eigensections} to write
\begin{align*}
\Delta_{\bar{\partial}}^{\theta,-1} \partial^{\nabla^{\theta,-2}} \circ \cdots \circ \partial^{\nabla^{\theta,-(q+1)}} u_n & = \partial^{\nabla^{\theta,-2}} \circ \cdots \circ \partial^{\nabla^{\theta,-(q+1)}} \Delta_{\bar{\partial}}^{\theta,-(q+1)} u_n \\
 & \quad \quad + C \cdot \partial^{\nabla^{\theta,-2}} \circ \cdots \circ \partial^{\nabla^{\theta,-(q+1)}} u_n ,
\end{align*}
where $C$ is a positive constant. Therefore we obtain a positive multiple of 
\begin{equation*}
\left< \partial^{\nabla^{\theta,-2}} \circ \cdots \circ \partial^{\nabla^{\theta,-(q+1)}} \Delta_{\bar{\partial}}^{\theta,-(q+1)} u_n , \partial^{\nabla^{\theta,-2}} \circ \cdots \circ \partial^{\nabla^{\theta,-(q+1)}} u_n \right>
\end{equation*}
plus the non-negative term $\left< \partial^{\nabla^{\theta,-2}} \circ \cdots \circ \partial^{\nabla^{\theta,-(q+1)}} \Delta_{\bar{\partial}}^{\theta,-(q+1)} u_n , \partial^{\nabla^{\theta,-2}} \circ \cdots \circ \partial^{\nabla^{\theta,-(q+1)}} \Delta_{\bar{\partial}}^{\theta,-(q+1)} u_n \right>$.

Now repeat the same process on the term 
\begin{equation*}
\left< \partial^{\nabla^{\theta,-2}} \circ \cdots \circ \partial^{\nabla^{\theta,-(q+1)}} \Delta_{\bar{\partial}}^{\theta,-(q+1)} u_n , \partial^{\nabla^{\theta,-2}} \circ \cdots \circ \partial^{\nabla^{\theta,-(q+1)}} u_n \right> .
\end{equation*}
Continuing inductively, at the end of the process we obtain non-negative terms plus a positive multiple of the non-negative term $\left< \Delta_{\bar{\partial}}^{\theta,-(q+1)} u_n, u_n \right>$. Substituting this into \eqref{eqn:laplacian-stage-q}, we see that
\begin{equation*}
\left< \Delta_{\bar{\partial}}^\theta s_n , s_n \right> \geq \sum_{\ell=1}^{q+1} \left( \theta - \ell (2g-2) \right) \left< s_n, s_n \right> .
\end{equation*}
By taking sequences as before, this inequality is preserved in the limit, and therefore it applies to sections in the closure. Since $\theta - m (2g-2) > 0$ by assumption, then the final term in this sum is strictly positive if $q \leq m-1$, and so there is a gap in the spectrum of $\Delta_{\bar{\partial}}^\theta$ at the eigenvalue $\sum_{\ell=1}^q (\theta - \ell (2g-2) )$. By \eqref{eqn:Laplacian-curvature} there is then a gap in the spectrum of $(\nabla^\theta)^* \nabla^\theta$ at the eigenvalue $\mu_q$ when $q \leq m-1$.

Finally, since $\partial^{\nabla^{\theta,-\ell}}$ commutes with $T_\gamma$ for all $\ell$ by the results of Section \ref{subsec:magnetic-schrodinger} and it is injective for all $\ell \leq m$ by Remark \ref{rem:injective-10}, then this $\partial^{\nabla^{\theta,-1}} \circ \cdots \circ \partial^{\nabla^{\theta,-q}}$ defines an isomorphism of $A_\theta$ modules
\begin{equation*}
H^0(\tilde{\mathcal{E}} \otimes K^{-q}) \cong E_{\mu_q} = \overline{\partial^{\nabla^{\theta,-1}} \circ \cdots \circ \partial^{\nabla^{\theta,-q}} \left( \ker \bar{\partial}^{\nabla^{\theta,-q}} \right)} . 
\end{equation*}
\end{proof}

 \section{An Index Theorem}
 
In the previous notation, let  $\nabla^\theta $ be the unitary connection on the trivial line bundle on $Z$, whose curvature is equal to $i \tilde B=i \theta \tilde\omega$. Consider the Dolbeault operator $\dbar$ on $X$ , and lift it to $\tilde\dbar$ on $Z$, which is 
 invariant under the group of deck transformations $\ZZ^{2g}$ for the abelian cover $Z \rightarrow X$. Let $\cE\to X$ be a holomorphic vector bundle over $X$, and $\nabla^\cE$ a $(0,1)$ connection on $\cE$. Let $\tilde\cE$ denote the lift of $\cE$ to $Z$, and $\tilde\nabla^\cE$ the lift of $\nabla^\cE$ to $Z$. Then as before, one checks that the operator $\tilde\dbar\otimes\tilde\nabla^\cE\otimes \nabla^\theta$ is invariant under the projective action of $\ZZ^{2g}$ as in the
 previous section.  Here
 \begin{equation}\label{eqn:L2-complex}
 \tilde\dbar\otimes\tilde\nabla^\cE\otimes \nabla^\theta: \Omega_{(2)}^{0,0}(Z, \cE) \longrightarrow \Omega_{(2)}^{0,1}(Z, \cE) 
 \end{equation}
 where $\Omega_{(2)}^{0,j}(Z, \cE), \, j=0,1$ denotes the space of square integrable differential $j$-forms on $Z$ with coefficients in $\cE$.

Define 
\begin{equation*}
h^0(\tilde{\nabla}^{\cE}\otimes \nabla^\theta) = \dim_\tau(\ker_{L^2}(\tilde\dbar\otimes \tilde{\nabla}^{\cE}\otimes \nabla^\theta) ) .
\end{equation*}
By $L^2$-Serre duality,
\begin{equation*}
h^1(\tilde\nabla^\cE\otimes \nabla^\theta)  = h^0(\tilde{\nabla}^{\cE^* \otimes K} \otimes \nabla^{-\theta}) = \dim_\tau(\ker_{L^2} ( \tilde\dbar\otimes \tilde{\nabla}^{\cE^* \otimes K} \otimes \nabla^{-\theta})).
\end{equation*}
where $\tau$ denotes the von Neumann trace or dimension. Elliptic regularity ensures that these numbers are finite.

The goal of this section is to prove the following result.

\begin{theorem}[$L^2$-Riemann-Roch for projective actions]\label{thm:RR}
The $L^2$ index of the induced connection $\tilde \dbar \otimes \nabla^{K^{-j}}\otimes \nabla^\theta$ is
\begin{align}\label{eqn:L2-index}
\begin{split}
{ \rm index}_{L^2}( \tilde\dbar\otimes \nabla^{\mathcal{E}}\otimes \nabla^\theta) & = h^0(\nabla^{\mathcal{E}}\otimes \nabla^\theta) - h^1(\nabla^{\mathcal{E}}\otimes \nabla^\theta) \\
 & = \deg(\cE)+rk(\cE) \theta + (1-g) rk(\cE)
\end{split}
\end{align}
 In the case where $\mathcal{E} = K_X^{-q}$ is a power of the canonical bundle on $X$, then
\begin{enumerate}
\item the $L^2$ index is
\begin{equation*}
h^0(\nabla^{K^{-q}} \otimes \nabla^\theta) - h^1(\nabla^{K^{-q}}\otimes \nabla^\theta)  = (2 q + 1) (1-g) + \theta .
\end{equation*}
 \item (vanishing) If $\theta > 2g-2,$ then $h^1(\nabla^\theta) = 0$, so that $h^0(\nabla^\theta) = 1-g + \theta$.
 \end{enumerate}
 \end{theorem}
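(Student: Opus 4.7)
The plan is to invoke the \emph{twisted $L^2$-index theorem for projective actions of discrete groups on coverings} of \cite{Mathai99} (see also \cite{CHMM98}), applied to the projective $\Gamma = \ZZ^{2g}$-action on $L^2(Z,\tilde{\mathcal{E}})$ given by the magnetic translations $T_\gamma$ of Section~2.2. Since $\tilde\dbar\otimes\tilde\nabla^{\mathcal{E}}\otimes\nabla^\theta$ is elliptic and $\Gamma$-equivariant for this twisted action, its $L^2$-index with respect to the von Neumann trace $\tau$ on $W^*(\Gamma,\sigma)$ equals a Riemann--Roch integral on the compact base $X$:
\[
{\rm index}_{L^2}(\tilde\dbar\otimes\tilde\nabla^{\mathcal{E}}\otimes\nabla^\theta) \;=\; \int_X \operatorname{ch}(\mathcal{E})\,\operatorname{Td}(X)\,e^{[B]},
\]
where $[B]=\theta\omega \in H^{1,1}(X,\RR)$ is the de Rham class of the magnetic field on $X$ (well-defined even though $\tilde B = p^*B$ is exact upstairs).

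Granting this, the remaining computation on $X$ is routine. With the conventions of Section~3 ($\int_X \omega = 1$), one has $\operatorname{Td}(X) = 1 + (1-g)\omega$, $\operatorname{ch}(\mathcal{E}) = \operatorname{rk}(\mathcal{E}) + c_1(\mathcal{E})$, and $e^{\theta\omega} = 1 + \theta\omega$. Multiplying out and extracting the top-degree component,
\[
\int_X \bigl( c_1(\mathcal{E}) + \operatorname{rk}(\mathcal{E})(1-g)\omega + \operatorname{rk}(\mathcal{E})\theta\omega \bigr) \;=\; \deg(\mathcal{E}) + \operatorname{rk}(\mathcal{E})(1-g) + \operatorname{rk}(\mathcal{E})\theta,
\]
which is the formula in \eqref{eqn:L2-index}. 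For part~(1), specialising to $\mathcal{E} = K^{-q}$ (rank one, with $\deg(K^{-q}) = -q(2g-2) = 2q(1-g)$) immediately gives $(2q+1)(1-g) + \theta$.

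For the vanishing in part~(2) I would combine the $L^2$-Serre duality identity stated just above the theorem, $h^1(\nabla^\theta) = h^0(\tilde\nabla^{K}\otimes\nabla^{-\theta})$, with a Bochner--Weitzenb\"ock argument: any $L^2$ solution $s$ of $(\tilde\dbar\otimes\tilde\nabla^{K}\otimes\nabla^{-\theta})s = 0$ satisfies, via \eqref{eqn:decompose-laplacian},
\[
0 \;=\; 2\langle \Delta_{\bar\partial}s, s\rangle \;=\; \|\nabla s\|^2 - (i*F)\,\|s\|^2,
\]
where the total curvature $i*F = (2g-2) - \theta$ is constant (the $K$-curvature $2g-2$ plus the magnetic curvature $-\theta$). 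Under the hypothesis $\theta > 2g-2$ this constant is strictly negative, so the right-hand side is non-positive while the left-hand side is non-negative, forcing $s = 0$. Hence $h^1(\nabla^\theta)=0$, and inserting this into part~(1) with $q=0$ yields $h^0(\nabla^\theta) = 1 - g + \theta$.

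The hard part will be verifying that the twisted $L^2$-index theorem of \cite{Mathai99} applies verbatim in this hyperbolic/maximal-abelian-cover setting: one must check that the cocycle $\sigma$ built from the magnetic potential $A'$ on $\widetilde{J(X)}$ in Section~2.2 fits the required hypotheses (in particular, that the lifted Dolbeault complex carries a $\Gamma$-equivariant structure of the type handled there), and one needs an $L^2$-Serre duality on $Z$ that respects the projective $\Gamma$-action and the trace $\tau$. Both should follow by combining the cocycle identity \eqref{eqn:sigma-cocycle} with the Atiyah-type framework of \cite{CHMM98,Mathai99} and standard complex-conjugation arguments; once these foundational points are secured, the characteristic-class computations and the Bochner estimate above complete the proof.
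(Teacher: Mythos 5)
Your proposal is correct and follows essentially the same route as the paper: invoking the twisted $L^2$-index theorem of \cite{Mathai99} (the paper passes through the higher twisted index theorem of \cite{MathaiMarcolli01} and then applies $\tau$, but the resulting Riemann--Roch integral over $X$ is identical), carrying out the same characteristic-class computation, and proving vanishing via $L^2$-Serre duality together with the curvature identity \eqref{eqn:decompose-laplacian}. Your Bochner--Weitzenb\"ock estimate spells out in slightly more detail what the paper compresses into the observation that $\Delta_{\bar\partial} = \Delta_{\partial} - i*F$ is strictly positive when $i*F < 0$, but the underlying argument is the same.
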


\begin{proof} 
 Note that since $Z$ is a Riemann surface, then \eqref{eqn:L2-complex} is a complex, and  
 by the higher twisted index theorem in \cite{MathaiMarcolli01}, the index is
 $$
 {\rm Index}_{A_\theta}(\tilde\dbar\otimes\tilde\nabla^\cE\otimes \nabla^\theta) = [\ker( \tilde\dbar\otimes\tilde\nabla^\cE\otimes \nabla^\theta) -
 \coker( \tilde\dbar\otimes\tilde\nabla^\cE\otimes \nabla^\theta)] \in K_0(\cA_\Theta) .
 $$
 As in the index for a family of elliptic operators, in general neither $\coker( \tilde\dbar\otimes\tilde\nabla^\cE\otimes \nabla^\theta)$ nor 
 $\ker( \tilde\dbar\otimes\tilde\nabla^\cE\otimes \nabla^\theta)$ are finite projective modules, however under a vanishing condition $\coker( \tilde\dbar\otimes\tilde\nabla^\cE\otimes \nabla^\theta) =0$, then $\ker( \tilde\dbar\otimes\tilde\nabla^\cE\otimes \nabla^\theta)$
 is a finite projective module, which is therefore a holomorphic vector bundle in terminology of subsection \ref{complexnctori}.

Then by the twisted index theorem \cite{Mathai99}, one has
\begin{align*}
\tau\left( {\rm Index}_{A_\theta}(\tilde\dbar\otimes\tilde\nabla^\cE\otimes \nabla^\theta)\right)
&= \frac{1}{2\pi}\int_X \text{Todd}(\Omega_X) \tr\exp(\Omega^\cE) \exp(\theta \omega)\\
&=  \frac{1}{2\pi}\int_X (1+\frac{1}{2}\Omega_X) (rk(\cE)+ \tr(\Omega^\cE)) (1+ \theta \omega)\\
&=  \text{deg}(\cE)+rk(\cE) \theta + (1-g) rk(\cE).
 \end{align*}
 On the other hand,
 $$
 \tau\left( {\rm Index}_{A_\theta}(\tilde\dbar\otimes\tilde\nabla^\cE\otimes \nabla^\theta)\right)
 = h^0(\tilde\nabla^\cE\otimes \nabla^\theta) -h^1(\tilde\nabla^\cE\otimes \nabla^\theta) ,
 $$
 which completes the proof of \eqref{eqn:L2-index}.

Now let $\cE= K_X^{-q}$ be the holomorphic line bundle. We conclude from  \eqref{eqn:L2-index} that 
\begin{equation}
h^0(\tilde\nabla^{K^{-q}}\otimes \nabla^\theta)= (-q \deg(K_X) + \theta) + 1-g = (2q+1)(1-g) + \theta > 0.
\end{equation}
If $q\ge 0$ is such that $(q+1)(2g-2) - \theta < 0$, then \eqref{eqn:decompose-laplacian} implies that $h^0(\tilde\nabla^{K^{q+1}} \otimes \nabla^{-\theta}) = 0$. Using $L^2$ Serre duality we conclude that
\begin{equation}\label{eqn:vanishing1}
h^1(\tilde\nabla^{K^{-q}} \otimes \nabla^\theta) = h^0(\tilde\nabla^{K^{q+1}} \otimes \nabla^{-\theta}) = 0 .
\end{equation}
\end{proof}

Theorem \ref{thm:discrete-spectrum} shows that $\ker_{A_\theta}(\tilde\dbar\otimes\tilde\nabla^\cE\otimes \nabla^\theta) = E_{\mu_q}$, the $\mu_q$-th eigenspace of the magnetic Laplacian. Using Theorem \ref{thm:RR} and the terminology in subsection \ref{complexnctori}, we have the following result on the dimension of these eigenspaces, which proves part \eqref{item:von-Neumann-dimension} of Theorem \ref{thm:main-results}.
 
 \begin{lemma}\label{dim}
Let $E_{\mu_q}$ be the $\mu_q$-th eigenspace of the magnetic Laplacian. Then
\begin{enumerate}
\item  $E_{\mu_q}$ is a holomorphic vector bundle over $\cA_\theta\otimes \cK(L^2(\cF))$, and 
\item The von Neumann dimension is $\dim_\tau(E_{\mu_q}) = (2q+1) (1-g)+ \theta>0$.
\end{enumerate}
\end{lemma}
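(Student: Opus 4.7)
The plan is to assemble three inputs: the spectral information from Theorem \ref{thm:main-results}(\ref{item:discrete-spectrum}), the identification of $E_{\mu_q}$ with $H^0(\tilde{\mathcal{E}} \otimes K^{-q})$ as $A_\theta$-modules from Proposition \ref{prop:spectral-gaps}, and the $L^2$-Riemann-Roch computation of Theorem \ref{thm:RR}.

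For statement (1), I would argue as follows. By Theorem \ref{thm:main-results}(\ref{item:discrete-spectrum}) the point $\mu_q$ is isolated in the spectrum of $H$, so the spectral projection $P_{\mu_q}$ lies in $A_\theta \otimes \cK(L^2(\cF))$. Since it is a projection in this algebra, $E_{\mu_q} = \im P_{\mu_q}$ is a finitely generated projective right module over $\cA_\theta \otimes \cK(L^2(\cF))$, which is precisely a vector bundle in the sense of subsection \ref{complexnctori}. Writing $E_{\mu_q} = P_{\mu_q}(\cA_\theta \otimes \cK)^{r}$ with $r = r_q$ and setting $\overline\nabla = P_{\mu_q}(\bar\partial)^{r}$ then supplies the required holomorphic connection, as remarked at the end of subsection \ref{complexnctori}. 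Alternatively, Proposition \ref{prop:spectral-gaps} identifies $E_{\mu_q}$ with $\ker_{L^2}\bar\partial^{\nabla^{\theta,-q}}$, onto which the ambient $\bar\partial$-operator descends.

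For statement (2), Proposition \ref{prop:spectral-gaps} applied with $\mathcal{E}$ the trivial line bundle carrying the connection $\nabla^\theta$ yields an isomorphism of $A_\theta$-modules $E_{\mu_q} \cong H^0(\tilde{\mathcal{E}} \otimes K^{-q})$, and hence $\dim_\tau(E_{\mu_q}) = h^0(\tilde\nabla^{K^{-q}} \otimes \nabla^\theta)$. Theorem \ref{thm:RR} with $\mathcal{E} = K^{-q}$ then gives
\[
h^0(\tilde\nabla^{K^{-q}} \otimes \nabla^\theta) - h^1(\tilde\nabla^{K^{-q}} \otimes \nabla^\theta) = (2q+1)(1-g) + \theta,
\]
and the vanishing \eqref{eqn:vanishing1} kills the $h^1$ term provided $(q+1)(2g-2) < \theta$. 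For $0 \le q \le m-1$ this is guaranteed by the standing hypothesis $\theta - m(2g-2) > 0$, since $(q+1)(2g-2) \le m(2g-2) < \theta$. Strict positivity of $(2q+1)(1-g) + \theta$ follows from the same chain of inequalities together with $g \ge 2$ in the hyperbolic case.

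The main delicate point I expect is not the arithmetic but the structural identification in (1): one must check that the holomorphic module structure produced by the spectral projection $P_{\mu_q}$ coincides with the one transported from $\ker_{L^2}\bar\partial^{\nabla^{\theta,-q}}$ via the isomorphism of Proposition \ref{prop:spectral-gaps}, and that the resulting object really is a vector bundle over $\cA_\theta \otimes \cK(L^2(\cF))$ in the sense of subsection \ref{complexnctori}. Once this identification is in place, the computation of $\dim_\tau(E_{\mu_q})$ in (2) is an immediate consequence of the $L^2$ index theorem and the vanishing of the higher cohomology.
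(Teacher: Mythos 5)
Your proof is correct and follows essentially the same path as the paper: identify $E_{\mu_q}$ with $H^0(\tilde{\mathcal{E}}\otimes K^{-q})$ via Proposition \ref{prop:spectral-gaps} and then apply the $L^2$-Riemann--Roch formula of Theorem \ref{thm:RR} together with the $h^1$-vanishing. Your two routes to part (1) (finite projectivity from $P_{\mu_q}\in A_\theta\otimes\cK(L^2(\cF))$ versus from the vanishing theorem applied to $\ker_{A_\theta}(\tilde\dbar\otimes\tilde\nabla^{K^{-q}}\otimes\nabla^\theta)$) are both valid and interchangeable given the spectral gap.
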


\section{Chern number of $E_{\mu_q}$ via a higher index theorem}

Our goal is to calculate the Chern number of the spectral subspace $E_{\mu_q}$ as a finite projective $A_\theta$-module, which will complete the proof of the final part of Theorem \ref{thm:main-results}. We do this by using the higher twisted index theorem \cite[Thm. 2.2]{MathaiMarcolli01}. The argument goes as follows. Let $c$ be the group cocycle on $\ZZ^{2g}$ corresponding to the symplectic 2-form, and let $\tau_c$ be the corresponding continuous 2-cocycle from \eqref{eqn:continuous-cocycle-def}. Applying the higher twisted index theorem and simplifying, we see that
$$
\tau_c({ \rm index}_{A_\theta}( \tilde\dbar\otimes \tilde{\nabla}^{K^{-j}}\otimes \nabla^\theta)) = \int_X \phi(c).
$$
By the vanishing theorem \eqref{eqn:vanishing1}, we deduce that if $\theta > (j+1)(2g-2)$ then ${ \rm ker}_{A_\theta}( \tilde\dbar\otimes \nabla^{K^{-j}}\otimes \nabla^\theta)$ is a finite projective $A_\theta$ module, and 
$$
\tau_c({ \rm ker}_{A_\theta}( \tilde\dbar\otimes \tilde{\nabla}^{K^{-j}}\otimes \nabla^\theta)) = \int_X \phi(c).
$$
Now $\phi(c)$ is easily seen to be the Bergman volume form on $X$, and  $\int_X \phi(c) = 2g$ if $X$ is a genus $g$ Riemann surface. Finally, we will prove in Theorem \ref{thm:discrete-spectrum} that 
$$
{ \rm ker}_{A_\theta}( \tilde\dbar\otimes \tilde{\nabla}^{K^{-j}}\otimes \nabla^\theta) \cong E_{\mu_q}
$$
as $A_\theta$ modules, therefore we conclude that if $X$ is a genus $g$ Riemann surface, then the Chern number of $E_{\mu_q}$ is 
\begin{equation}\label{eqn:chern-number-smooth-proof}
\tau_c(E_{\mu_q}) = \int_\Sigma \phi(c) = 2g,
\end{equation}
which proves \eqref{eqn:chern-number-eigenspace}.

We can extend the above calculations to the case of a ``good'' orbifold given by a quotient $X = X' / \Gamma$ as follows. Let $g'$ be the genus of $X'$ and let $p : X' \rightarrow X$ be the associated ramified cover. Since $\phi(c)$ pulls back to the Bergman volume form on $X'$, then 
\begin{equation*}
\int_X \phi(c) = \frac{1}{|\Gamma|} \int_{\Sigma'} p^* \phi(c) = \frac{2g'}{|\Gamma|} .
\end{equation*} 
Let $n_y = |\Gamma_y|$ be the order of the isotropy group at each point $y \in X'$, let $R = \{ y \in X' \, : \, n_y > 1 \}$ be the ramification divisor, and let $n = |R|$ be the total number of ramification points in $X'$. The Riemann-Hurwitz theorem shows that
\begin{align*}
2g'-2 & = |\Gamma| (2g-2) + \sum_{y \in X'} (n_y - 1) \\
\Leftrightarrow \quad \frac{2g'}{|\Gamma|} & = 2g-2 + \frac{2}{|\Gamma|} + \frac{1}{|\Gamma|} \sum_{y \in X'} (n_y - 1) \\
 & = 2g-2 + \frac{1}{|\Gamma|} \sum_{y \in R} n_y + \frac{2-n}{|\Gamma|} .
\end{align*}
The term $\frac{1}{|\Gamma|} \sum_{y \in R} n_y$ can be further simplified by noting that for each $y \in R$, the number of points in the orbit $\Gamma \cdot y$ is equal to $\frac{|\Gamma|}{n_y}$. Therefore $\frac{1}{|\Gamma|} \sum_{y \in R} n_y$ is equal to the number of $\Gamma$-orbits in $R$, which we denote by $\# (R / \Gamma)$. In conclusion, we have proved that
\begin{equation*}
\int_X \phi(c) = 2g-2 + \# (R / \Gamma) + \frac{2-n}{|\Gamma|} .
\end{equation*}
By Theorem \ref{thm:discrete-spectrum}, we know that there is an isomorphism of $A_\theta$ modules $\ker_{A_\theta}(\tilde\dbar\otimes\tilde\nabla^\cE\otimes \nabla^\theta) \cong E_{\mu_q}$, the $\mu_q$-th eigenspace of the magnetic Laplacian. All of the above is summarised in the following lemma, which proves \eqref{eqn:chern-number-orbifold} and thus concludes the proof of Theorem \ref{thm:main-results}.
 
 \begin{lemma}
The Chern class of the $\mu_q$-th eigenspace of the magnetic Laplacian is
\begin{equation}\label{eqn:chern-number-orbifold-proof}
\tau_c(E_{\mu_q}) = 2g-2 + \# (R / \Gamma) + \frac{2-n}{|\Gamma|} .
\end{equation}
 \end{lemma}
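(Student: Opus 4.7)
The plan is to follow the outline sketched in the discussion preceding the lemma: compute $\tau_c$ on the $L^2$-index via the higher twisted index theorem, collapse the index to the kernel $E_{\mu_q}$ using the vanishing result, and then translate the resulting integral over $X$ into the claimed combinatorial quantity via Riemann--Hurwitz on the ramified cover $X' \to X$.

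First I would apply the higher twisted index theorem \cite[Thm.~2.2]{MathaiMarcolli01} to the twisted Dolbeault operator $\tilde\dbar \otimes \tilde\nabla^{K^{-q}} \otimes \nabla^\theta$ with the cyclic $2$-cocycle $\tau_c$ from \eqref{eqn:continuous-cocycle-def} attached to the symplectic area cocycle $c$ on $\ZZ^{2g}$. As in the paper's discussion, the right-hand side of the higher index formula reduces to $\int_X \phi(c)$ because the extra factors coming from Todd and Chern classes on a Riemann surface contribute only in top degree through $\phi(c)$ itself. The key input here is the identification of $\phi(c)$ with the pullback of the Bergman volume form on $J(X)$; this step uses the fact that $c$ represents the symplectic form on $\ZZ^{2g} = H_1(X,\ZZ)$ and that the Abel--Jacobi map pulls the Bergman form back to a positive multiple of the hyperbolic volume.

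Next I would invoke the vanishing statement \eqref{eqn:vanishing1}, which is applicable since the standing assumption in Theorem~\ref{thm:main-results}\eqref{item:discrete-spectrum} forces $\theta > m(2g-2) \ge (q+1)(2g-2)$ for $q \le m-1$, to conclude that the $A_\theta$-cokernel vanishes. Thus the index reduces to the kernel, and by Theorem~\ref{thm:discrete-spectrum} the kernel is isomorphic, as an $A_\theta$-module, to the spectral subspace $E_{\mu_q}$. Combining these two steps yields
\begin{equation*}
\tau_c(E_{\mu_q}) = \int_X \phi(c).
\end{equation*}

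For the orbifold case $X = X'/\Gamma$, I would pull $\phi(c)$ back along the ramified cover $p : X' \to X$. Since $\phi(c)$ is the Bergman volume form (which is pulled back from the smooth surface $X'$), one has $\int_X \phi(c) = \frac{1}{|\Gamma|}\int_{X'} p^*\phi(c) = \frac{2g'}{|\Gamma|}$. I would then apply Riemann--Hurwitz
\begin{equation*}
2g'-2 = |\Gamma|(2g-2) + \sum_{y\in X'}(n_y - 1),
\end{equation*}
and rewrite $\frac{1}{|\Gamma|}\sum_{y\in R} n_y$ as the orbit count $\#(R/\Gamma)$ using $|\Gamma\cdot y| = |\Gamma|/n_y$, which yields the claimed value $2g-2 + \#(R/\Gamma) + \frac{2-n}{|\Gamma|}$.

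The main obstacle, and the one that really does the work, is the combination of the higher index computation with the vanishing theorem: one must verify that $\tau_c$ can be pushed inside and that the higher index reduces cleanly to $\int_X\phi(c)$ on a Riemann surface, so that in particular the noncommutative Chern character integrates to a topological invariant of $X$ alone. The orbifold rewriting via Riemann--Hurwitz is then a routine combinatorial accounting.
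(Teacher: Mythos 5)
Your proposal is correct and follows essentially the same route as the paper: apply the higher twisted index theorem to get $\tau_c$ of the index as $\int_X\phi(c)$, collapse the index to the kernel $E_{\mu_q}$ via the vanishing result and Theorem \ref{thm:discrete-spectrum}, pull back along $X'\to X$ to get $\frac{2g'}{|\Gamma|}$, and finish with Riemann--Hurwitz and the orbit count. The only small caveat is your justification that the Abel--Jacobi map pulls the Bergman form back to "a positive multiple of the hyperbolic volume" --- the paper simply identifies $\phi(c)$ with the Bergman volume form and uses $\int_X\phi(c)=2g$; the relation to the hyperbolic volume form is neither needed nor quite the right framing, since the Bergman and hyperbolic forms are not constant multiples of each other in general --- but this does not affect the correctness of the argument.
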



\end{document}